\definecolor{webgreen}{rgb}{0,.5,0}
\definecolor{webblue}{rgb}{0,0,.5}
\tikzstyle{block} = [draw, rectangle, minimum height=2em, minimum width=3em]
\tikzstyle{input} = [rectangle]
\tikzstyle{output} = [coordinate]
\tikzstyle{trace} = [draw, isosceles triangle]
\newcommand{\comment}[1]{}   
\newcommand{\A}{\mathcal{A}}
\newcommand{\C}{\mathcal{C}}
\newcommand{\D}{\mathcal{D}}
\renewcommand{\H}{\mathcal{H}}
\newcommand{\K}{\mathcal{K}}
\newcommand{\M}{\mathcal{M}}
\renewcommand{\P}{\mathcal{P}}
\renewcommand{\S}{\ensuremath{\mathcal{S}}}
\newcommand{\NN}{\mathbb{N}}
\newcommand{\poly}{\ensuremath{\text{poly}}}
\newcommand{\rand}{\raisebox{-1pt}{\ensuremath{\,\xleftarrow{\raisebox{-1pt}{$\scriptscriptstyle\$$}}}}}
\newcommand{\from}{\ensuremath{\leftarrow}}
\newtheorem{scheme}{Scheme}
\newcommand{\one}{\mathds 1}
\newcommand{\outerprod}[2]{|#1\rangle\langle #2|}
\newcommand\supp{\textbf{supp\,}}
\DeclareMathOperator{\tr}{Tr}
\newcommand{\expref}[2]{\texorpdfstring{\hyperref[#2]{#1~\ref{#2}}}{#1~\ref{#2}}}
\newcommand{\revise}[1]{}
\newcommand{\eps}{\varepsilon}
\newcommand{\alg}{\mathcal}
\newcommand{\KeyGen}{\ensuremath{\mathsf{KeyGen}}\xspace}
\newcommand{\Enc}{\ensuremath{\mathsf{Enc}}\xspace}
\newcommand{\Dec}{\ensuremath{\mathsf{Dec}}\xspace}
\newcommand{\negl}{\operatorname{negl}}
\newcommand{\inrand}{\rand}
\newcommand{\pr}{\operatorname{Pr}}
\newcommand{\bit}[1]{\{0,1\}^{#1}}
\newcommand{\reg}[1]{\ensuremath{#1}}
\newcommand{\regF}{\reg{F}}
\newcommand{\ketbra}[2]{\ket{#1}\bra{#2}}			
\newcommand{\egoketbra}[1]{\ketbra{#1}{#1}}			
\newcommand{\states}{\mathfrak D}
\newcommand{\theoremStatementqOWF}{If quantum-secure one-way functions exist, then so do IND-CCA1-secure private-key quantum encryption schemes.}
\newcommand{\theoremStatementqTOWP}{If quantum-secure trapdoor one-way permutations exist, then so do semantically secure public-key quantum encryption schemes.}
\begin{document}

\setcounter{tocdepth}{2}
\makeatletter
\renewcommand*\l@author[2]{}
\renewcommand*\l@title[2]{}
\makeatletter

\title{Computational Security of\\ Quantum Encryption}

\ifthenelse{\boolean{SUBMISSION}}
{
\author{\vspace*{-1.5cm} }
\institute{\vspace*{-1cm}\ }
}
{
\author{Gorjan Alagic\inst{1}
	\and Anne Broadbent\inst{2}
	\and Bill Fefferman\inst{3}
	\and Tommaso Gagliardoni \inst{4}
	\and Christian Schaffner\inst{5}
	\and Michael St.~Jules\inst{6}
	}
\institute{
{Department of Mathematical Sciences, University of Copenhagen\\{\tt galagic@gmail.com}} \and
{Department of Mathematics and Statistics, University of Ottawa\\{\tt abroadbe@uottawa.ca}} \and
{Joint Center for Quantum Information and Computer Science (QuICS), University of Maryland.\\{\tt wjf@umd.edu}} \and
{Cryptoplexity, TU Darmstadt, Germany\\{\tt tommaso@gagliardoni.net}} \and
{QuSoft, University of Amsterdam and CWI, The Netherlands\\{\tt c.schaffner@uva.nl}} \and
{Department of Mathematics and Statistics, University of Ottawa\\{\tt mstju032@uottawa.ca}}
}
}

\maketitle
\thispagestyle{plain} 
\begin{abstract}
Quantum-mechanical devices have the potential to transform cryptography. Most research in this area has focused either on the information-theoretic advantages of quantum protocols or on the security of classical cryptographic schemes against quantum attacks. In this work, we initiate the study of another relevant topic: the encryption of quantum data in the computational setting. 

\vspace{.25cm}
In this direction, we establish quantum versions of several fundamental classical results. First, we develop natural definitions for private-key and public-key encryption schemes for quantum data.
We then define notions of semantic security and indistinguishability, and, in analogy with the classical work of Goldwasser and Micali, show that these notions are equivalent. Finally, we construct secure quantum encryption schemes from basic primitives. In particular, we show that quantum-secure one-way functions imply IND-CCA1-secure symmetric-key quantum encryption, and that quantum-secure trapdoor one-way permutations imply semantically-secure public-key quantum encryption.


\end{abstract}


\ifthenelse{\boolean{SUBMISSION}}
{
}
{
\tableofcontents
	\newpage
\setcounter{page}{1}
}


\section{Introduction}
\label{sec:introduction}

Quantum mechanics changes our view of information processing: the ability to access, operate and transmit data according to the laws of quantum physics opens the doors to a vast realm of possible applications.
Cryptography is one of the areas that is most seriously impacted by the potential of quantum information processing, since the security of most cryptographic primitives in use today relies on the hardness of computational problems that are easily broken by adversaries having access to a quantum computer~\cite{Sho94}.

While the impact of quantum computers on cryptanalysis is tremendous, quantum mechanics itself predicts physical phenomena that can be exploited in order to achieve new levels of security. These advantages were already mentioned in the late 1970's in pioneering work of Wiesner~\cite{Wie1983}, and have led to the very successful theory of quantum key distribution (QKD)~\cite{BennettB84}, which has already seen real-world applications~\cite{ABB+2014}. QKD achieves information-theoretically secure key expansion, and has the advantage of relatively simple hardware requirements (notwithstanding a long history of successful attacks to QKD at the implementation level~\cite{ABB+2014}).

The cryptographic possibilities of quantum information go well beyond QKD. Indeed, quantum copy-protection~\cite{Aar2009}, quantum money \cite{Wie1983,AC2012,MS2010} and revocable time-release encryption~\cite{Unr2014} are just some examples where properties unique to quantum data enable new cryptographic constructions (see \cite{BS16} for a survey). 
Thanks in part to these tremendous cryptographic opportunities, we  envisage an increasing need for an information infrastructure that enables quantum information. Such an infrastructure will be required to support:
\begin{itemize}
\item  \textbf{Quantum functionality:} honest parties can store, exchange, and compute on quantum data; \item  \textbf{Quantum security:} quantum functionality is protected against quantum adversaries.
\end{itemize}

The current state-of-the-art is lacking even the most basic cryptographic concepts in the context of quantum functionality and quantum adversaries. In particular, the study of encryption of quantum data (which is arguably one of the most fundamental building blocks) has so far been almost exclusively limited to the quantum one-time pad~\cite{AMTW00} and other aspects of the information-theoretic setting~\cite{D09,DD10} (one notable exception being~\cite{BJ15}). The achievability of other basic primitives such as public-key encryption has not been thoroughly investigated for the case of fully quantum cryptography. This situation leaves many open questions about what can be achieved in the quantum~world.

\subsection{Summary of Contributions and Techniques}
\label{sec:summ-contributions}
In this work, we establish quantum versions of several fundamental classical (\emph{i.e.} ``non-quantum'') results in the setting of computational security. Following Broadbent and Jeffrey~\cite{BJ15}, we consider private-key and public-key encryption schemes for quantum data. In these schemes, the key is a classical bitstring\footnote{While quantum keys might be of interest, they are not necessary for constructing secure schemes~\cite{BJ15}.}, but both the plaintext and the ciphertext are quantum states. Key generation, encryption, and decryption are implemented by polynomial-time quantum algorithms. Such schemes admit an appropriate definition of indistinguishability security, following the classical approach~\cite{BJ15}: the quantum adversary is given access to an encryption oracle, and must output a challenge plaintext; given either the corresponding ciphertext or the encryption of $\outerprod{0}{0}$ (each with probability 1/2), the adversary must decide which was the case.

Our main contributions are the following. First, we give several natural formulations of semantic security for quantum encryption schemes, and show that all of them are equivalent to indistinguishability. This cements the intuition that possession of the ciphertext should not help the adversary in computing anything about the plaintext. Second, we give two constructions of encryption schemes with semantic security: a private-key scheme, and a public-key scheme. The private-key scheme  satisfies a stronger notion of security: indistinguishability against chosen ciphertext attacks (IND-CCA1). A more detailed summary of these contributions follows.

\subsubsection{Semantic Security vs. Indistinguishability}
\label{sec:intro-SEM}

 Semantic security formalizes the notion of security of an encryption scheme under computational assumptions. Originally introduced by Goldwasser and Micali~\cite{GM},  this definition posits a game: an adversary is given the encryption of a message~$x$ and some side information $h(x)$, and is challenged to output the value of an objective function~$f$ evaluated at~$x$. An encryption scheme is deemed secure if every adversary can be closely approximated by a \emph{simulator} who is given only $h(x)$; crucially, the simulator must work for every possible choice $(h, f)$ of side information and objective function.  This models the intuitive notion that having access to a ciphertext gives the adversary essentially no advantage in computing functions related to the plaintext.

While semantic security corresponds to a notion of security that is intuitively strong, it is cumbersome to use in terms of security proofs. In order to address this problem, Goldwasser and Micali~\cite{GM} showed the equivalence of semantic security with another cryptographic notion, called \emph{indistinguishability}. The intuitive description of indistinguishability is also in terms of a game, this time with a \emph{single} adversary. The adversary prepares a pair of plaintexts $x_0$ and $x_1$ and submits them to a challenger, who chooses a uniformly random bit $b$ and returns the encryption of~$x_b$. The adversary then performs a computation and outputs a bit $v$; the adversary wins the game if $v=b$ and loses otherwise. An encryption scheme is deemed secure if no adversary wins the game with probability significantly larger than $1/2$. This definition models the intuitive notion that the ciphertexts are indistinguishable: whatever the adversary does with one ciphertext, the outcome is essentially the same if run on the other ciphertext.

In \expref{Section}{sec:quantum-sem}, we define semantic security for the encryption of \emph{quantum} data---thus establishing a parallel with the notions and results of encryptions as laid out by Goldwasser and Micali. When attempting to transfer the definition of semantic security to the quantum world, the main question one encounters is to determine the quantum equivalents of $h(x)$ and $f(x)$ as described above (because of the no-cloning theorem~\cite{Wootters1982},  we cannot postulate a polynomial-time experiment that simultaneously involves some quantum plaintext \emph{and} a function of the plaintext---see \expref{Section}{sec:defining-semantic} for further discussions related to this issue). We propose a number of alternative definitions in order to deal with this situation (\expref{Definition}{def:SEM}, \expref{Definition}{def:SEMtwo}, and \expref{Definition}{def:SEMthree}.) Perhaps the most surprising is our definition of SEM (\expref{Definition}{def:SEM}), which does away completely with the need to explicitly define  analogues of the functions $h$ and $f$, instead relying on a \emph{message generator} that outputs three registers, consisting of the ``plaintext'', ``side information'' and ``target output'' (there is no further structure imposed on the contents of these registers). Intuitively, we think of the adversary's goal being to output the value contained in the ``target output'' register. Formally, however, 
 \expref{Definition}{def:SEM} shows that the role of the ``target output'' register is actually to help the distinguisher: semantic security corresponding to the situation where no distinguisher has a non-negligible advantage in telling apart the real scenario (involving the adversary) and the ideal scenario (involving the simulator), \emph{even given access to the ``target output'' system}.
Our main result in this direction (see \expref{Section}{sec:IND-equiv-SEM}) is the equivalence between semantic security and indistinguishability for quantum encryption schemes:
\begin{theorem}\label{thm:intro:ind-equiv-sem}
A quantum encryption scheme is semantically secure if and only if it has indistinguishable encryptions.
\end{theorem}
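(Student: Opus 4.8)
The plan is to prove the two implications separately, following the template of the classical Goldwasser--Micali argument but taking care of the features special to quantum data: the possible entanglement between the plaintext and the auxiliary registers, and the no-cloning constraint that is built into the three-register message generator of \expref{Definition}{def:SEM}.

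\textbf{Indistinguishability implies semantic security.} Given any efficient semantic-security adversary $\A$ together with a message generator, I would exhibit a simulator $\Sim$ that reproduces $\A$'s behaviour without ever touching the challenge ciphertext. The canonical choice is the simulator that runs $\A$ on a \emph{blank} ciphertext: it generates its own key, encrypts $\outerprod{0}{0}$, and hands the result to $\A$ in place of the real encryption of the plaintext register, while forwarding the genuine side-information register untouched. The task then reduces to bounding, for every distinguisher that additionally holds the ``target output'' register, the advantage in telling apart (i) $\A$ acting on a real encryption of the plaintext, and (ii) $\A$ acting on an encryption of $\outerprod{0}{0}$. I would bound this quantity directly by the indistinguishability advantage, the point being that the message generator, the adversary $\A$, and the final distinguisher can all be absorbed into a single IND-distinguisher that internally prepares and retains the side-information and target registers.

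\textbf{Semantic security implies indistinguishability.} I would argue the contrapositive. If the scheme fails to have indistinguishable encryptions, then there is an efficient procedure that, given a challenge plaintext pair and an encryption of a uniformly random one of the two, guesses the random bit $b$ with non-negligible advantage over $1/2$. From this I would build a semantic-security attack: design a message generator that samples $b$, places the plaintext $x_b$ in the plaintext register and a classical copy of $b$ in the target-output register, and let the semantic-security adversary run the distinguisher on the ciphertext and output its guess. The real adversary then matches the target register with non-negligible advantage, whereas any simulator receives only the bit-independent side information and is information-theoretically limited to guessing $b$ with probability $1/2$; the resulting gap, certified by the distinguisher that compares $\A$'s output against the target register, contradicts semantic security.

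\textbf{Main obstacle.} The delicate step is the reduction in the first direction, because indistinguishability concerns a \emph{single}, unentangled challenge plaintext, whereas in the semantic-security experiment the plaintext register emerges from a message generator and may be entangled with both the side-information and the target-output registers held outside the encryption oracle. I expect the crux to be a hybrid lemma showing that the basic indistinguishability guarantee lifts, with no loss beyond the stated advantage, to arbitrary entangled auxiliary input: since the auxiliary registers never interact with the encryption algorithm, an IND-distinguisher can generate and keep them internally, emulate $\A$ and the final distinguisher, and thereby convert any semantic-security distinguishing advantage into an equal indistinguishability advantage. Establishing this lifting cleanly --- and in particular verifying that the blank-ciphertext simulator is efficient and that the oracle access granted to $\A$ is faithfully emulated throughout the reduction --- is where the real work lies.
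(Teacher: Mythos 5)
Your two reductions coincide, in substance, with the paper's own proofs (\expref{Theorem}{thm:ind''-implies-sem} and \expref{Theorem}{thm:sem-implies-ind''}): for one direction, the blank-ciphertext simulator together with absorbing the message generator, $\A$, and the final distinguisher into a single IND adversary; for the other, a message generator that flips a coin $b$, outputs either the real plaintext or $\egoketbra{0}_M$, and records $b$ in the $F$ register, so that any simulator is reduced to blind guessing while the real adversary can run the IND distinguisher. Also, the ``main obstacle'' you single out is not an obstacle in this framework: \expref{Definition}{def:IND} already allows the challenge plaintext to be entangled with an auxiliary register that the IND distinguisher keeps, so your ``lifting'' amounts to relabelling $E\otimes F$ as that auxiliary register --- exactly what the paper does without further comment.

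One step, as written, would fail. Your simulator ``generates its own key'' and encrypts $\egoketbra{0}$ under it. That is sound only in the private-key setting, where the paper makes this exact move and justifies it by observing that $\Enc_{k}\egoketbra{0}$ and $\Enc_{k'}\egoketbra{0}$ are identically distributed because $k$ and $k'$ are identically distributed \emph{and both hidden from every other party}. In the public-key setting this argument collapses: encryptions under a freshly generated $pk'$ need not look like encryptions under the genuine $pk$, and the SEM distinguisher can detect the discrepancy because $\M(pk)$ may copy the (classical, public) key into $E$ or $F$. Concretely, a scheme whose ciphertexts carry a classical copy $\egoketbra{pk}$ of the public key can perfectly well be IND secure, yet the output of your simulator would then be almost perfectly distinguishable from that of the adversary run on a genuine ciphertext. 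The fix is what the paper does: in the public-key setting the simulator encrypts $\egoketbra{0}$ under the real $pk$ (which it knows, $pk$ being public), and in the CPA/CCA1 variants it instead queries its own encryption oracle on $\egoketbra{0}$ and forwards $\A$'s oracle queries. A last, minor point: your converse tacitly works with the ``guess the bit'' formulation of indistinguishability, whereas the paper's IND is stated as an absolute difference of acceptance probabilities; bridging the two requires considering both $\A$ and $\A\oplus 1$, as in the paper's proof and in \expref{Proposition}{prop:IND'-iff-IND}.
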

What is more, because our definitions and proofs hold when restricted to the classical case (and in fact can be shown as generalizations of the standard classical definitions), our contribution  sheds new light on semantic security: to the best of our knowledge, this is the first time that semantic security has been defined \emph{without} the need to explicitly refer to functions $h$ and~$f$.

\subsubsection{Quantum Encryption Schemes}
In \expref{Section}{sec:Constructions}, we give two constructions of quantum encryption schemes that achieve semantic security (and thus also indistinguishability, by \expref{Theorem}{thm:intro:ind-equiv-sem}.) Our constructions make use of two basic primitives. The first is a \emph{quantum-secure one-way function} (qOWF). This is a family of deterministic functions which are efficiently computable in classical polynomial time, but which are impossible to invert even in quantum polynomial time. It is believed that such functions can be constructed from certain algebraic problems~\cite{MRV07,KK07}. The existence of qOWFs implies the existence of \emph{quantum-secure pseudorandom functions} (qPRFs)~\cite{Zhandry2012}. We show that a qPRF can, in turn, be used to securely encrypt quantum data with classical private keys. More precisely, we have the following:

\begin{theorem}
\label{thm:intro:SKE}
\theoremStatementqOWF
\end{theorem}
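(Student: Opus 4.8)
The plan is to realize a \emph{computational quantum one-time pad}: use a qPRF (which exists under the hypothesis by Zhandry's theorem~\cite{Zhandry2012}, as noted above) to stretch a short classical key into the pseudorandom Pauli corrections of a quantum one-time pad, while attaching a fresh classical nonce to each ciphertext. Concretely, \KeyGen samples a qPRF key $k$; to encrypt an $n$-qubit state $\rho$, \Enc samples a nonce $r \rand \bit{\secpar}$, derives a $2n$-bit pad $(a,b) = f_k(r)$ (extending the qPRF output length by counter mode if $2n > \secpar$), and outputs the classical register holding $r$ together with the quantum register $X^a Z^b \rho Z^b X^a$; \Dec recomputes $(a,b) = f_k(r)$ from the attached nonce and reapplies the same Pauli. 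Correctness is immediate, since $(X^a Z^b)^2 = \pm \id$ acts trivially under conjugation.

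Security is established by a single reduction to the qPRF, routed through the information-theoretic secrecy of the quantum one-time pad. First I would replace $f_k$ everywhere in the IND-CCA1 game---in the encryption oracle, in the (pre-challenge) decryption oracle, and in the challenge---by a truly random function $F$, bounding the change in the adversary's advantage by the qPRF distinguishing advantage. The reduction $\B$ runs the adversary, answering each encryption query by sampling a fresh nonce and querying its oracle for the pad, answering each decryption query by querying its oracle on the attached nonce, and producing the challenge ciphertext by querying its oracle on a fresh $r^*$; it finally outputs the bit indicating whether the adversary guessed correctly. Because Zhandry's qPRF remains secure against quantum \emph{superposition} queries, this simulation is faithful even if the adversary submits decryption queries whose nonce register is in superposition.

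It then remains to show that in the idealized game (with $F$ random) no adversary beats guessing by more than a negligible amount. The crucial point is that the challenge nonce $r^*$ is sampled uniformly and independently \emph{after} all pre-challenge queries; since only polynomially many nonces are ever touched and nonces are $\secpar$ bits long, $r^*$ avoids them except with probability $\negl(\secpar)$, so $F(r^*)$ is a uniformly random pad independent of the adversary's entire view. The Pauli-twirl identity $\Exp_{a,b}[X^a Z^b \rho Z^b X^a] = \id/2^n$ then collapses the challenge ciphertext to the maximally mixed state regardless of whether $\rho$ or $\outerprod{0}{0}$ was encrypted, so the two cases are statistically identical and the advantage is zero up to the negligible collision term. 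Chaining the two bounds gives that every quantum polynomial-time adversary has IND-CCA1 advantage $\negl(\secpar)$.

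The main obstacle I anticipate is the careful treatment of the decryption oracle in the CCA1 phase: unlike in the CPA setting, the adversary can use decryption queries to learn the pad $f_k(r)$ on adversarially chosen nonces, so the argument must lean entirely on the freshness of the challenge nonce together with the fact that CCA1 grants no decryption access after the challenge---which is precisely why this construction achieves CCA1 but not CCA2, the one-time pad being malleable so that a single post-challenge decryption query would break it. Making the ``freshness'' claim rigorous against an adversary issuing superposition decryption queries, and verifying that $\B$ reproduces both oracles using only the PRF interface, are the delicate steps; the remaining ingredients (quantum one-time pad secrecy, correctness, and qPRF output-length extension) are standard.
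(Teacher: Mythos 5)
Your construction and proof skeleton coincide with the paper's (Scheme~\ref{defi:SKE-from-qPRF} together with Lemma~\ref{lem:qPRF-implies-SKE}): a quantum one-time pad whose Pauli pad is derived by a qPRF from a fresh classical nonce attached to the ciphertext, a hybrid step replacing the qPRF by a truly random function $F$, and an information-theoretic Pauli-twirl argument in the idealized game. The gap is in your treatment of the decryption oracle. You allow the adversary to submit decryption queries whose nonce register is in superposition, and you propose to absorb this by using Zhandry's superposition-secure PRFs in the reduction; but your ideal-game argument is purely classical: ``only polynomially many nonces are ever touched, so $r^*$ avoids them except with negligible probability, hence $F(r^*)$ is independent of the adversary's view.'' These two choices are incompatible. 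Once superposition decryption queries are permitted, the set of ``touched nonces'' is not well defined --- a single query can place amplitude on \emph{every} nonce --- so freshness of $r^*$ cannot be established by a collision bound; one would need a genuine quantum-query argument (a BBBV-style hybrid or a one-way-to-hiding lemma) to show that reprogramming $F$ at $r^*$ is undetectable. You correctly flag this as the ``delicate step,'' but it is precisely the step your proof leaves open.

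The paper closes this hole by a design choice rather than a harder proof: in Scheme~\ref{defi:SKE-from-qPRF}, $\Dec$ first \emph{measures} the first $2n$ qubits in the computational basis to obtain the nonce $r'$, and only then decrypts. This forces every decryption query to be classical on the nonce register, so the classical freshness/collision argument is sound (the paper runs it via a strengthened adversary that gets a classical oracle for $f$ before the challenge and random pairs $(r_i, f(r_i))$ afterwards), and --- crucially --- the entire reduction only ever needs \emph{classical} oracle access to the function. Hence the weak qPRF of Definition~\ref{def:qPRF}, secure only against classical queries, suffices, and that is exactly the primitive obtained from qOWFs by the unchanged classical HILL+GGM argument (Theorem~\ref{thm:qOWFsimpliesqPRFs}). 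Your route instead requires the strictly stronger superposition-secure PRF \emph{and} an additional quantum-query lemma for the ideal game; it can very likely be completed along those lines, but as written the freshness claim does not follow. (The paper makes this very trade-off explicit in its conclusions: removing the measurement in $\Dec$ is possible, but then the underlying primitives must be secure against superposition queries.)
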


The second basic primitive we consider is a \emph{quantum-secure one-way permutation with trapdoors} (qTOWP). In analogy with the classical case, a qTOWP is a qOWF with an additional property: each function in the family is a permutation whose efficient inversion is possible if one possesses a secret string (the trapdoor). While our results appear to be the first to consider applications to quantum data, the notion of quantum security for trapdoor permutations is of obvious relevance in the security of classical cryptosystems against quantum attacks. Some promising candidate qTOWPs from lattice problems are known~\cite{PW08,GPV08}. We show that such functions can be used to give secure public-key encryption schemes for quantum data, again using only classical keys.
\begin{theorem}
\label{thm:intro:PKE}
\theoremStatementqTOWP
\end{theorem}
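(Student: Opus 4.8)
The plan is to reduce to indistinguishability, which by \expref{Theorem}{thm:intro:ind-equiv-sem} is equivalent to semantic security, and then build a public-key scheme with indistinguishable encryptions by \emph{hybrid encryption}: mask the quantum plaintext with the information-theoretic quantum one-time pad using fresh classical Pauli keys, and transmit those keys under a \emph{classical} public-key scheme that is secure against quantum adversaries. For the classical ingredient, I would build a classical public-key bit-encryption scheme from the qTOWP family $\{f\}$ in the textbook way: using a Goldreich--Levin hardcore predicate $h(\cdot)$, encrypt a bit $m$ as $(f(r),\,h(r)\xor m)$ for random $r$, decrypting by inverting $f$ with the trapdoor. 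The key point is to invoke the \emph{quantum} Goldreich--Levin theorem so that distinguishing the hardcore bit reduces to quantum-inverting $f$, contradicting one-wayness; encrypting bit-by-bit then yields a classical scheme $\mathsf{cPKE}$ for the $2n$-bit pad key that remains secure against quantum distinguishers.

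The quantum scheme is then as follows. \KeyGen\ samples a pair $(f,t)$ and outputs $f$ as the public key and the trapdoor $t$ as the secret key. To encrypt an $n$-qubit state $\rho$, sample $(a,b)\in\{0,1\}^{2n}$ uniformly and output the pair $\bigl(X^a Z^b\,\rho\,Z^b X^a,\ \mathsf{cPKE}_f(a,b)\bigr)$; decryption recovers $(a,b)$ with the trapdoor and undoes the Paulis. Correctness is immediate, and since we are in the public-key setting the adversary already holds $f$ and can run encryptions itself, so the encryption oracle in the indistinguishability game comes for free.

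For security I would run a short hybrid argument valid against any quantum distinguisher, including one that retains a reference system entangled with the challenge plaintext. Starting from $\bigl(X^a Z^b\,\rho\,Z^b X^a,\ \mathsf{cPKE}_f(a,b)\bigr)$, the first hybrid replaces the classical component by $\mathsf{cPKE}_f(0^{2n})$; indistinguishability of this step is exactly the quantum security of $\mathsf{cPKE}$. In this hybrid the classical part no longer depends on $(a,b)$, so averaging over the uniformly random pad keys sends the quantum register together with any entangled reference to $(\id/2^n)\tensor\rho_{\mathrm{ref}}$, the maximally mixed state, independent of $\rho$. Applying the identical chain to the encryption of $\outerprod{0}{0}$ lands on the same distribution, whence the two real ciphertexts are computationally indistinguishable.

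The main obstacle I anticipate is the quantum security of the classical component: one must ensure that both the hardcore-bit reduction and the resulting IND-CPA reduction survive against quantum adversaries holding quantum side information, which is precisely where the quantum Goldreich--Levin theorem and careful bookkeeping of entangled reference systems are needed. By contrast, the quantum-one-time-pad steps are information-theoretic and comparatively routine.
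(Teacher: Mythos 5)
Your proposal is correct, but it takes a genuinely different route from the paper's. You build a KEM/DEM-style hybrid scheme: pad the plaintext with a quantum one-time pad under \emph{fresh uniform} Pauli keys $(a,b)$, and ship those keys under a classical public-key scheme $\mathsf{cPKE}_f(m)=(f(r),\,h(r)\oplus m)$ built bit-by-bit from the qTOWP and its quantum Goldreich--Levin hardcore predicate. Your security argument then splits cleanly into one computational step (replace $\mathsf{cPKE}_f(a,b)$ by $\mathsf{cPKE}_f(0^{2n})$, a reduction to post-quantum IND-CPA of the classical scheme, which is sound because the reduction samples $(a,b)$ itself and can prepare the padded state $X^aZ^b\rho_{ME}Z^bX^a$ on its own) and one information-theoretic step (averaging over the now-unused uniform $(a,b)$ sends $M$, together with any entangled reference, to $\one_M/2^n\otimes\rho_E$). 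The paper instead constructs the scheme \emph{directly} from the permutation: it first builds a Blum--Micali-style qPRG $G(d)=b(f^{2n-1}(d))\cdots b(d)$ (\expref{Lemma}{lemma:prg-construction}), encrypts as $\outerprod{f_i^{2n}(d)}{f_i^{2n}(d)}\otimes P_{G(d)}\rho P_{G(d)}$, proves that a qPRG-derived pad yields indistinguishability even against entangled side information (\expref{Lemma}{lemma:indistinguish}), and then extends the hybrid/predictor argument to show the hardcore bits stay pseudorandom \emph{given the exposed final iterate} $f_i^{2n}(d)$. Both routes rest on the same core primitive (the Adcock--Cleve quantum Goldreich--Levin theorem) and both finish by invoking IND $\implies$ SEM (\expref{Theorem}{thm:ind''-implies-sem}). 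What yours buys is modularity --- the classical component is a black box, so any post-quantum IND-CPA classical PKE would do, and the two proof steps are independent; the cost is a longer ciphertext ($2n$ images of $f$ plus $2n$ masked bits, versus the paper's single image). The paper's construction is more compact and exercises its qPRG machinery, but pays for this with the extra ``seed-exposed'' pseudorandomness argument that your fresh-key approach avoids entirely.
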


We remark that \expref{Theorem}{thm:intro:SKE} and \expref{Theorem}{thm:intro:PKE} are analogues of standard results in the classical literature~\cite{Goldreich2004}.

\subsection{Related Work}

Prior work has considered the computational security of quantum methods to encrypt classical data~\cite{OKS00,sim2,sim3}. Information-theoretic security for the encryption of quantum states has been considered in the context of the one-time pad~\cite{AMTW00,BR2003,HLSW2004,L02}, as well as entropic security~\cite{D09,DD10}. Computational indistinguishability notions for encryption in a quantum world were proposed in independent and concurrent work~\cite{BJ15,Gagliardoni2015}. While~\cite{BJ15} considers the encryption of quantum data (and proposes the first constructions based on hybrid classical-quantum encryption),~\cite{Gagliardoni2015} considers
the security of {\em classical} schemes which can be accessed in a quantum way by the adversary.

The results of~\cite{Gagliardoni2015} are part of a line of research of \emph{``post-quantum''} cryptography, which investigates the security of classical schemes against quantum adversaries, with the goal of finding ``quantum-safe'' schemes. This includes the study of encryption and signature schemes secure against attacks by quantum algorithms~\cite{BBD09}, and also the study of superposition attacks against quantum oracles \cite{BDF+11,Zhandry2012,Unr15}. Still in the model of superposition attacks,~\cite{Boneh2013a} studies quantum indistinguishability under chosen plaintext and chosen ciphertext attacks. This definition was improved in~\cite{Gagliardoni2015} to allow for a quantum challenge phase. The latter paper also initiates the study of quantum semantic security of classical schemes and gives the first classical construction of a quantumly secure encryption scheme from a family of quantum-secure pseudorandom permutations. Another quantum indistinguishability notion in the same spirit has been suggested (but not further analyzed) in~\cite[Def.~5.3]{Velema13}.

Several previous works have considered how classical security proofs change in the setting of quantum attacks (see, e.g., \cite{Unr2010,FKS+2013,Son2014}.) Our results can be viewed as part of this line of work; one distinguishing feature is that we are able to extend classical security proofs to the setting of quantum functionality secure against quantum adversaries. This setting has seen increasing interest in the past decade, with progress being made on several topics: multi-party quantum computation~\cite{BCG+2006}, secure function evaluation~\cite{DNS2010,DNS2012}, one-time programs~\cite{BGS2013},  and delegated quantum computation~\cite{BFK2009,Bro2015}.

\paragraph{\textbf{Outline.}} The remainder of the paper is structured as follows. In \expref{Section}{sec:preliminaries}, we set down basic notation and recall a few standard facts regarding classical and quantum computation. In \expref{Section}{sec:encryption}, we define symmetric-key and public-key encryption for quantum states (henceforth ``quantum encryption schemes''), as well as a notion of indistinguishability (including IND-CPA and IND-CCA1) for such schemes. \expref{Section}{sec:quantum-sem} defines semantic security for
  quantum encryption schemes, and shows equivalence with  indistinguishability. \expref{Section}{sec:Constructions} gives our two constructions for quantum encryption schemes. Finally, we close with some discussion of future work in \expref{Section}{sec:conclusions}.


\section{Preliminaries}\label{sec:preliminaries}

We introduce some basic notation for classical (\expref{Section}{sec:classical-prelims}) and quantum (\expref{Section}{sec:quantum-prelims}) information processing and information-theoretic encryption. \expref{Section}{sec:efficient} concerns basic issues in efficient algorithms and \expref{Section}{sec:oracles} discusses the use of oracles.

\subsection{Classical States, Maps, and the One-Time Pad}
\label{sec:classical-prelims}

Let $\NN$ be the set of positive integers. For $n \in \NN$, we set $[n] = \{1, \cdots, n\}$. Define 
$\bit{*} := \cup_n \bit{n}$. An element $x \in \bit{*}$ is called a bitstring, and $|x|$ denotes its length, \emph{i.e.}, its number of bits. We reserve the notation $0^n$ (resp., $1^n$) to denote the $n$-bit string with all zeroes (resp., all ones).

For a finite set $X$, the notation $x \inrand X$ indicates that $x$ is selected uniformly at random from~$X$. For a probability distribution $S$, the notation $x \from S$ indicates that $x$ is sampled according to $S$. Given finite sets $X$ and $Y$, the set of all functions from $Y$ to $X$ is denoted $X^Y$ (or sometimes $\{X \rightarrow Y\}$).
We will usually consider functions $f$ acting on binary strings, that is, of the form $f: \bit{n} \rightarrow \bit{m}$, for some positive integers $n$ and~$m$.
We will also consider function families  $f:\bit{*} \rightarrow \bit{*}$ defined on bitstrings of arbitrary size. One can construct such a family simply by choosing one function with input size $n$, for each $n$. We will sometimes abuse notation by stating that $f:\bit{n} \rightarrow \bit{m}$ defines a function family; in that case, it is implicit that $n$ is a parameter that indexes the input size and $m$ is some function of $n$ (usually a polynomial) that indexes the output size. Given a bitstring $y$ and a function family $f$, the preimage of $f$ under $y$ is defined by $f^{-1}(y) := \{ x \in \bit{*} : f(x) = y\}$.

We will often write $\negl(\cdot)$ to denote a function from $\NN$ to $\NN$ which is ``negligible'' in the sense that it grows at an inverse-superpolynomial rate. More precisely, $\negl(n) < 1 / p(n)$ for every polynomial $p : \NN \rightarrow \NN$ and all sufficiently large $n$. A typical use of negligible functions is to indicate that the probability of success of some algorithm is too small to be amplified to a constant by a feasible (\emph{i.e.}, polynomial) number of repetitions.

Given two bitstrings $x$ and $y$ of equal length, we denote their bitwise XOR by $x \oplus y$.  Recall that the \emph{classical one-time pad} encrypts a plaintext $x \in \bit{n}$ by XORing it with a uniformly random string (the key) $r \inrand \bit{n}$. Decryption is performed by repeating the operation, \emph{i.e.}, by XORing the key with the ciphertext. Since the uniform distribution on $\bit{n}$ is invariant under XOR by $x$, the ciphertext is uniformly random to parties having no knowledge about~$r$~\cite{Shannon1949}. A significant drawback of the one-time pad is the key length. In order to reduce the key length, one may generate~$r$ pseudorandomly; this key-length reduction requires making computational assumptions about the adversary.

\subsection{Quantum States, Maps, and the One-Time Pad}
\label{sec:quantum-prelims}

Given an $n$-bit string $x$, the corresponding quantum-computational $n$-qubit basis state is denoted~$\ket{x}$. The $2^n$-dimensional Hilbert space spanned by $n$-qubit basis states will be denoted
$$
\H_n := \textbf{span} \left\{ \ket{x} : x \in \bit{n} \right\}\,.
$$
We denote by $\states(\H_n)$ the set of density operators (i.e., valid quantum states) on~$\H_n$. These are linear operators on $\states(\H_n)$ which are positive-semidefinite and have trace equal to $1$. When considering different physical subsystems, we will denote them with uppercase Latin letters; when a Hilbert space corresponds to a subsystem, we will place the subsystem label in the subscript. For instance, if $F \cup G \cup H = [n]$ then $\H_n = \H_F \otimes \H_G \otimes \H_H.$ Sometimes we will write explicitly the subsystems a state belongs to as subscripts; this will be useful when considering, \emph{e.g.}, the reduced state on some of the subspaces. For example, we will sometimes express the statement $\rho \in \states(\H_F \otimes \H_G \otimes \H_H)$ simply by calling the state $\rho_{FGH}$; in that case, the state obtained by tracing out the subsystem~$H$ will be denoted~$\rho_{FG}$.

Given $\rho, \sigma \in \states(\H)$, the trace distance between $\rho$ and $\sigma$ is given by half the trace norm $\|\rho - \sigma\|_1$ of their difference. When $\rho$ and $\sigma$ are classical probability distributions, the trace distance reduces to the total variation distance. Physically realizable maps from a state space $\states(\H)$ to another state space $\states(\H')$ are called \emph{admissible}---these are the completely positive trace-preserving (CPTP) maps. For the purpose of distinguishability via input/output operations, the appropriate norm for CPTP maps is the diamond norm, denoted $\|\cdot\|_\diamond$. The set of admissible maps coincides with the set of all maps realizable by composing (i.) addition of ancillas, (ii.) unitary evolutions, (iii.) measurements in the computational basis, and (iv.) tracing out subspaces. We remark that unitaries $U \in U(\H_n)$ act on $\states(\H_n)$ by conjugation: $\rho \mapsto U \rho U^\dagger$. The identity operator~$\one_n \in U(\H_n)$ is thus both a valid map, and (when normalized by $2^{-n}$) a valid state in $\states(\H_n)$---corresponding to the classical uniform distribution.

Recall  the single-qubit Pauli operators  defined as:
$$
I =
\begin{pmatrix}
1 & 0 \\
0 & 1 \\
\end{pmatrix}\,,
\qquad
X =
\begin{pmatrix}
0 & 1 \\
1 & 0 \\
\end{pmatrix}\,,
\qquad
Y =
\begin{pmatrix}
0 & -i \\
i & 0 \\
\end{pmatrix}\,,
\qquad
Z =
\begin{pmatrix}
1 & 0 \\
0 & -1 \\
\end{pmatrix}\,.
$$
The Pauli operators are Hermitian and unitary quantum gates, i.e.\ $P^\dag=P$ and $P^\dag P=P P^\dag = P^2 = I$ for all $P \in \{I,X,Y,Z\}$. It is easy to check that applying a uniformly random Pauli operator to any single-qubit density operator results in the maximally mixed state:
$$
\frac{1}{4}\left(\rho + X\rho X + Y\rho Y + Z\rho Z \right) = \frac{\one_1}{2}
\qquad
\text{for all }
\rho \in \states (\H_1)\,.
$$

Since the Pauli operators are self-adjoint, we may implement the above map by choosing two bits $s$ and $t$ uniformly at random and then applying
$$
\rho \mapsto X^s Z^t \rho Z^t X^s\,.
$$
To observers with no knowledge of $s$ and $t$, the resulting state is information-theoretically indistinguishable from $\one_1/2$. Of course, if we know $s$ and $t$, we can invert the above map and recover $\rho$ completely.

The above map can be straightforwardly extended to the $n$-qubit case in order to obtain an elementary {\em quantum encryption scheme} called the {\em quantum one-time pad}.
We first set $X_j = \one^{\otimes j-1} \otimes X \otimes \one^{\otimes n-j}$ and likewise for $Y_j$ and $Z_j$. We define the $n$-qubit Pauli group $\mathcal P_n$ to be the subgroup of $\operatorname{SU}(\H_n)$ generated by $\{X_j, Y_j, Z_j : j = 1, \dots, n \}$. Note that Hermiticity is inherited from the single-qubit case, i.e.\ $P^\dag = P$ for every $P \in \mathcal{P}_n$.

\begin{definition}
[quantum one-time pad]
For $r \in \bit{2n}$, we define the {\em quantum one-time pad (QOTP)} on $n$ qubits with classical key $r$ to be the map:
$$
P_r := \prod_{j=1}^{n} X_{j}^{r_{2j-1}} Z_{j}^{r_{2j}} \in \P_n\,.
$$
\end{definition}
The effect of $P_r$ on any quantum state $\rho \in \states(\H_n)$ is simply
$$
\frac{1}{2^{2n}}\sum_{r \in \bit{2n}} P_r \rho P_r = \frac{\one_n}{2^n}\,.
$$
As before, the map $\rho \mapsto P_r \rho P_r$ (for uniformly random key $r$) is an information-theoretically secure symmetric-key encryption scheme for quantum states.

Just as in the classical case~\cite{Shannon1949}, any reduction in key length is not possible without compromising information-theoretic security~\cite{AMTW00,BR2003}. Of course, in practice the key length of the one-time pad (quantumly or classically) is highly impractical. This is a crucial reason to consider---as we do in this work---encryption schemes which are secure only against computationally bounded adversaries.

\subsection{Efficient Classical and Quantum Computations}\label{sec:efficient}

We will refer to several different notions of efficient algorithms. The most basic of these is a deterministic polynomial-time algorithm (or PT). A PT $\alg A$ is defined by a polynomial-time uniform\footnote{Recall that polynomial-time uniformity means that there exists a polynomial-time Turing machine which, on input~$n$ in unary, prints a description of the $n$th circuit in the family.} family $\alg A := \{\alg A_n\}_{n \in \NN}$ of classical Boolean circuits over some gate set, with one circuit for each possible input size. For a bitstring $x$, we define $\alg A(x) := \alg A_{|x|}(x)$. We say that a function family $f:\bit{n} \rightarrow \bit{m}$ is PT-computable if there exists a PT $\alg A$ such that $\alg A(x) = f(x)$ for all $x$; it is implicit that $m$ is a function of $n$ which is bounded by some polynomial, e.g., the same one that bounds the running time of~$\alg A$.

A probabilistic polynomial-time algorithm (or PPT) is again a polynomial-time uniform family of classical Boolean circuits, one for each possible input size~$n$. The $n$th circuit still accepts $n$ bits of input, but now also has an additional ``coins'' register of $p(n)$ input wires. Note that uniformity enforces that the function $p$ is bounded by some polynomial. For a PPT $\alg A$, $n$-bit input $x$ and $p(n)$-bit coin string $r$, we set $\alg A(x; r) := \alg A_n(x; r)$. In contrast with the PT case, the notation $\alg A(x)$ will now refer to the random variable $\alg A(x; r)$ where $r \inrand \bit{p(n)}$. Overloading notation slightly, $\alg A(x)$ can also mean the corresponding probability distribution; for example, the set of all possible outputs of $\alg A$ on the input $1^n$ is denoted $\supp \alg A(1^n)$.

We define a quantum polynomial-time algorithm (or QPT) to be a polynomial-time uniform family of quantum circuits, each composed of gates that may perform general admissible operations, chosen from some finite, universal set. A commonly-used alternative is to specify that the elements of the gate set are unitary. In terms of computational power, the models are the same~\cite{AKN1998}, however using admissible operations (versus unitary ones only) allows us to formalize a wider range of oracle-enabled QPT machines (see Section~\ref{sec:oracles}).
In general, a QPT $\alg A$ defines a family of admissible maps from input registers to output registers: $\alg A : \states(\H_n) \rightarrow \states\bigl(\H_u\bigr)$. As before, the $n$th circuit in the family will be denoted by $\alg A_n$.
When $\rho$ is an $n$-qubit state, $\alg A(\rho)$  denotes the corresponding $u(n)$-qubit output state (by uniformity, $u$ is bounded by some polynomial). Overloading the notation even further, for $n$-bit strings $x$ we set $\alg A(x) := \alg A(\outerprod{x}{x})$. The expression $\alg A(x) = y$ for classical~$y$ is taken to evaluate to true if the output register of the circuit contains the state $\outerprod{y}{y}$ exactly. Unless explicitly stated, any statements about the probability of an event involving a QPT are taken over the measurements of the QPT, in addition to any indicated random variables. For instance, the expression $\pr_{x \in_R \bit{n}} [\alg A(x) = y]$ means the probability that, given a uniformly random input string $x$, the output register of the $n$th circuit of the QPT $\alg A$ executed on $\outerprod{x}{x}$, after all gates and measurements have been applied, is in the state~$\outerprod{y}{y}$.

At times, we will define QPTs with many input and output quantum registers. In these cases, some straightforward bookkeeping (e.g., via an additional classical register) may be required; for the sake of clarity, we will simply assume that this has been handled.

Throughout this work, we are concerned only with polynomial-time \emph{uniform} computation. That is to say, the circuit families that describe any PT, PPT, or QPT will always be both of polynomial length \emph{and} generatable by some fixed (classical) polynomial-time Turing machine. In particular, we consider uniform adversaries only---although all of our results carry over appropriately to the non-uniform setting as well.

\subsection{Oracles}
\label{sec:oracles}

We denote by $\mathcal A^f$ an algorithm which has oracle access to some function family~$f$. Such an algorithm (whether PT, PPT, or QPT) is defined as above, except each circuit in the algorithm can make use of additional ``oracle gates'' (one for each possible input size) which evaluate~$f$. In the case of PTs and PPTs, oracles can implement any function from bitstrings to bitstrings. In the case of QPTs, we  consider two different oracle types.

First, we allow purely classical oracles. Just as in the case of PTs and PPTs, a classical oracle implements a function $f$ from bitstrings to bitstrings.  In the case of a QPT with a classical oracle, \emph{queries can be made on classical inputs only} (this is sometimes referred to as ``standard-security''~\cite{Zhandry2012}).  We emphasize that we do not require that the oracle is made reversible, nor do we allow the QPT to input superpositions. Note that any such oracle can be implemented by an admissible map, such that classical inputs $x$ are deterministically mapped to $f(x)$ (to see this, start with a Boolean circuit for $f$, make it reversible, and then recall that adding ancillas and discarding output bits are admissible operations). While it might seem that disallowing superposition inputs is an artificial and unrealistic restriction, in our case it actually strengthens results. For instance, we will show that secure quantum encryption can be achieved using pseudorandom functions which are secure only against quantum adversaries possessing just classical oracle access. One can of course also ask for \emph{more powerful} functions (which are secure against superposition access, or ``quantum-secure''~\cite{Zhandry2012}) but this turns out to be unnecessary in our case.
Second, we also allow oracles that are admissible maps. More precisely, for an admissible map family $\C$, we write $\alg A^\C$ to denote a QPT whose circuits can make use of special ``oracle gates'' which implement admissible maps from the family $\C$. Each such gate accepts a  quantum register as input, to which it applies the appropriate admissible map from the family, and returns an output register. It is not necessary for the input and output registers to have the same number of qubits.

In any case, each use of an oracle gate counts towards the circuit length, and hence also towards the total computation time of the algorithm. In particular, no PT, PPT or QPT algorithm may make more than a polynomial number of oracle calls.


\section{Quantum Encryption and Indistinguishability}\label{sec:encryption}

In this section, we give general definitions of encryption schemes for quantum data (\expref{Section}{sec:encryp-schemes}) and a corresponding notion of indistinguishability, including IND-CPA and IND-CCA1 (\expref{Section}{sec:IND-CPA}.)

\subsection{Quantum Encryption Schemes}
\label{sec:encryp-schemes}

We start by defining {\em secret-key encryption for quantum data}. In the following we assume that the secret key is a classical bitstring, while the plaintext and the ciphertext can be arbitrary quantum states. We refer to $\K$, $\H_M$ and $\H_C$ as the key space, the message (or plaintext) space, and the ciphertext space, respectively. We remark that these are actually infinite families of spaces, each with a number of (qu)bits which scales polynomially with~$n$. We assume that $\K := \bit{n}$, so that the key-length is~$n$ bits, and the  plaintext and the ciphertext lengths are $m\leq\poly(n)$ and $c\leq\poly(n)$ qubits, respectively. The key-generation algorithm accepts a description of the security parameter~$n$ in unary and outputs a classical key of length~$n$. Later,  we will define an additional Hilbert space $\H_E$ in order to model auxiliary information used by some adversary. Encryption accepts a classical key and a plaintext, and outputs a ciphertext; decryption accepts a classical key and a ciphertext, and outputs a plaintext. The correctness guarantee is that plaintexts are preserved (up to negligible error) under encryption followed by decryption under the same key.

\begin{definition}\label{def:SKE}
A {\em quantum symmetric-key encryption scheme (or qSKE)} is a triple of QPTs:
\begin{enumerate}
\item (key generation) $\KeyGen: 1^n \mapsto k \in \K$
\item (encryption) $\Enc: \K \times \states(\H_M) \rightarrow \states(\H_C)$
\item (decryption) $\Dec: \K \times \states(\H_C) \rightarrow \states(\H_M)$
\end{enumerate}
such that $\| \Dec_k \circ \Enc_k - \one_M \|_\diamond \leq \negl(n)$
for all $k \in \emph{\supp} \KeyGen(1^n)$.
\end{definition}

In the above, we used a convenient shorthand notation for encryption and decryption maps with a fixed key $k$ (which is classical), formally defined by $\Enc_k : \rho \mapsto \Enc(k, \rho)$ and
$\Dec_k : \sigma \mapsto \Dec(k, \sigma).$

Next, we define a notion of {\em public-key encryption for quantum data}. In addition to the usual spaces from the symmetric-key setting above, we now also have a public key of length $p(n) \leq $ poly$(n)$ bits. We define the related public-key space as $\K_{pub} \subset \bit{p}$ and reuse $\K$ for the corresponding private-key space.

\begin{definition}\label{def:PKE}
A {\em quantum public-key encryption scheme (or qPKE)} is a triple of QPTs:
\begin{enumerate}
\item (key-pair generation) $\KeyGen: 1^n \mapsto (pk,sk) \in \K_{pub} \times \K$
\item (encryption with public key) $\Enc: \K_{pub} \times \states(\H_M) \rightarrow \states(\H_C)$
\item (decryption with private key) $\Dec: \K \times \states(\H_C) \rightarrow \states(\H_M)$
\end{enumerate}
such that $\| \Dec_{sk} \circ \Enc_{pk} - \one_m \|_\diamond \leq \negl(n)$
for all $(pk, sk) \in \emph{\supp} \KeyGen(1^n)$.
\end{definition}
In this case, we again placed the relevant keys in the subscript, i.e.,
\begin{equation*}\label{eq:pub_enc_dec}
\Enc_{pk} : \rho \mapsto \Enc(pk, \rho)
\qquad \text{and} \qquad
\Dec_{sk} : \sigma \mapsto \Dec(sk, \sigma)\,.
\end{equation*}

We remark that some variations of the above two definitions are possible. For instance, one could demand that encryption followed by decryption is exactly equal to the identity operator. The schemes we present in \expref{Section}{sec:Constructions} will in fact satisfy this stronger condition. 

\subsection{Indistinguishability of Encryptions}
\label{sec:IND-CPA}

Following the classical definition, the security notion of \emph{quantum indistinguishability under chosen plaintext attacks} has been considered previously for the case of quantum
encryption schemes 
in~\cite{BJ15} and for
classical encryption schemes
in~\cite{Gagliardoni2015}. Here, we present the definition from~\cite{BJ15}, which we slightly extend to the  CCA1 (chosen ciphertext
attack) setting.
The security definitions are formulated with the public-key (or
asymmetric-key) setting in mind, and we clarify  when meaningful
differences in the symmetric-key setting arise.

Our definition models a situation in which
an honest user encrypts
messages of the adversary's choice; the adversary then attempts to match the ciphertexts to the plaintexts. In our formulation, an IND adversary consists of two QPTs: the {\em message generator} and the {\em distinguisher}. The message generator takes as input the security parameter and a public key, and outputs a challenge state consisting of a plaintext and some auxiliary information. The auxiliary information models, for instance, the fact that the output state might be entangled with some internal state of the adversary itself. Then the distinguisher receives this auxiliary information, and a state which might be either the encryption of the original challenge state or the encryption of the zero state. The distinguisher's goal is to decide which of the two is the case.

Security in this model requires that the adversary does not succeed with probability significantly better than guessing. We also define two standard variants: indistinguishability under chosen plaintext attack (IND-CPA) and indistinguishability under chosen-ciphertext-attack (IND-CCA1). We leave the definition of CCA2 (adaptive chosen ciphertext attack) security as an interesting open problem. As before, all circuits are indexed by the security parameter.

\begin{definition}[IND]\label{def:IND}
A qPKE scheme $(\KeyGen, \Enc, \Dec)$ has {\em indistinguishable encryptions} (or is {\em IND secure}) if for every QPT adversary $\A=(\M,\D)$ we have:
\begin{equation*}
\left| \Pr \left[ \; \D\big\{ (\Enc_{pk} \otimes \one_E) \rho_{ME} \big\} = 1 \; \right] -
\Pr \left[ \; \D\big\{ (\Enc_{pk} \otimes \one_E) (\egoketbra{0}_M \otimes \rho_E) \big\} = 1 \;  \right] \right| \leq \negl(n)
\end{equation*}
where $\rho_{ME} \from \M(pk)$, $\rho_E = \tr_M(\rho_{ME})$, and the probabilities are taken over $(pk, sk) \leftarrow \KeyGen(1^n)$ and the internal randomness of \Enc, $\M$, and $\D$.
\begin{itemize}
\item \textbf{\emph{IND-CPA:}} In addition to the above, $\M$ and $\D$ are given oracle access to $\Enc_{pk}$.
\item \textbf{\emph{IND-CCA1:}} In addition to IND-CPA, $\M$ is given oracle access to $\Dec_{sk}$.
\end{itemize}
\end{definition}

Here we use $\egoketbra{0}_M$ to denote $\egoketbra{0^m}$, where $m$ is the number of qubits in the $M$ register.

The definition is illustrated in \expref{Figure}{fig:ind}. The symmetric-key scenario is the same, except $pk = sk$, and $\M$ receives only a blank input. We remark that in the public-key setting, IND implies IND-CPA: an adversary with knowledge of $pk$ can easily simulate the $\Enc_{pk}$ oracle. Note that, under CPA, the IND definition is known to be equivalent to IND in the \emph{multiple-message} scenario~\cite{BJ15}.

\begin{figure}[h]
\begin{center}
\begin{tikzpicture}[auto, node distance=7em, >=latex']
\node at (-0.2,0) [input](input){{\scriptsize $pk$}};
\node at (1,0) [block, minimum height=3em](M){$\M$};
\node at (5.5, 0.31) [block](Enc){$\Enc_{pk}$};
\node at (7.5, 0) [block, minimum height=3em](D){$\D$};
\node at (8.5, 0) [output](output){};
\draw
 (input) edge[double] node {} (M)
 (M.30) edge node [pos=0.14] {$M$} (Enc)
 (M.330) edge node [pos=0.09] {$E$} (D.210)
 (Enc) edge node {} (D.150)
 (D) edge[double] node {} (output);
\end{tikzpicture}

\bigskip

\begin{tikzpicture}[auto, node distance=7em, >=latex']
\node at (-0.2,0) [input](input){{\scriptsize $pk$}};
\node at (1,0) [block, minimum height=3em](M){$\M$};
\node at (2.65, 0.31) [trace](trace){};
\node at (4, 0.31) [input](0){{\scriptsize $|0 \rangle$}};
\node at (5.5, 0.31) [block](Enc){$\Enc_{pk}$};
\node at (7.5, 0) [block, minimum height=3em](D){$\D$};
\node at (8.5, 0) [output](output){};
\draw
 (input) edge[double] node {} (M)
 (M.30) edge node [pos=0.5] {$M$} (trace)
 (0) edge[double] node [pos=0.5] {} (Enc)
 (M.330) edge node [pos=0.09] {$E$} (D.210)
 (Enc) edge node {} (D.150)
 (D) edge[double] node {} (output);
\end{tikzpicture}
\caption{IND posits that a QPT $(\M, \D)$ cannot distinguish between these two scenarios.} \label{fig:ind}
\end{center}
\end{figure}
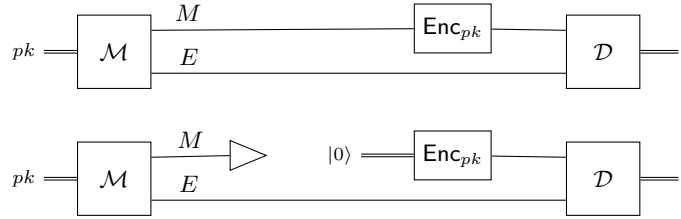



\section{Quantum Semantic Security}
\label{sec:quantum-sem}

This section is devoted to defining quantum semantic security~(\expref{Section}{sec:defining-semantic}), and showing its equivalence with quantum indistinguishability~(\expref{Section}{sec:IND-equiv-SEM}).

Following the classical definition, the security notion of \emph{quantum semantic security under chosen plaintext attacks} has been  
given previously in~\cite{Gagliardoni2015}
for the case of
a special class of quantum states arising when considering quantum access to classical encryption schemes.
Here, we give a more general definition for arbitrary quantum plaintexts. As we outlined the classical situation with semantic security in \expref{Section}{sec:intro-SEM}, we start with a discussion of some difficulties in transitioning to the quantum setting. A similar discussion can be found in~\cite{Gagliardoni2015} and we explain below where and why we make different choices.

\subsection{Difficulties in the Quantum Setting}
When attempting to transfer the definition of semantic security to the quantum world, the main question one encounters is to determine the quantum equivalents of $h(x)$ and $f(x)$ (as it is relatively clear that the plaintext $x$ would have as quantum equivalent a quantum state $\rho_{M}$, in a \emph{message register}, $M$).

For the case of the side-information, $h(x)$,  one might attempt to postulate that this side information is available via the output of a quantum map $\Phi_h$, evaluated on~$\rho_{M}$. There are, however, two obvious problems with this approach: firstly, it is unclear how to \emph{simultaneously} generate both $\rho_{M}$ and $\Phi_h(\rho_{M})$ (the main obstacle stemming from the quantum \emph{no-cloning} theorem~\cite{Wootters1982}, according to which it is not possible to perfectly copy an unknown quantum state)\footnote{\cite{Gagliardoni2015} solves the issue by requiring a quantum circuit that takes classical randomness as input and outputs plaintext states. Hence, multiple plaintext states can be generated by using the same randomness.}. Secondly, it is well-established that the most general type of quantum side-information includes entanglement (contrary to the scenario studied in~\cite{Gagliardoni2015}). We therefore conclude that side information should be modelled simply as an extra register (called~$E$) such that $\rho_{ME}$ are in an arbitrary quantum state (as generated by some process---for a formal description, see \expref{Definition}{def:SEM}).

For the case of the target function $f$, one might also postulate a quantum map~$\Phi_f$, the goal then (for both the adversary and simulator), being to output $\Phi_f(\rho_M)$. However, given that quantum states and maps form a continuum, one must exercise care in quantifying when a simulator has successfully simulated the adversary.
We propose three possible tests for quantifying ``success'' in the semantic security game, each leading to its own definition.
Since we show  that all three definitions are equivalent, we conclude that it is a matter of taste (or context) which definition to label as \emph{the} definition of quantum semantic security. 
We focus in this section on the first one, which we called SEM, because we find that it the most natural.
We give formal definitions and proofs of equivalence for all three definitions in Appendix~\ref{app:alternate-Sem-Sec}. 
Here is an overview of the three different notions:
\begin{itemize}
\item \textbf{SEM.} In Definition~\ref{def:SEM}, a state  $\rho_{MEF}$ is generated; intuitively, the contents of register $\reg F$ can be seen as a ``target'' output that the adversary tries to achieve (however, this is not quite the case as we point out shortly). We then postulate a quantum polynomial time \emph{distinguisher} who is given the $F$ register and charged with distinguishing the output of the adversary from the output of the simulator, with security being associated with the inability of the distinguisher in telling the two situations apart. We thus see that the role of register~$\reg F$ is actually to assist the distinguisher: semantic security corresponds to the situation where the distinguisher essentially cannot tell the real from ideal apart, \emph{even with access to the $\regF$ system}.
\item \textbf{SEM2}. In \expref{Definition}{def:SEMtwo}, we specify instead that the state $\rho_{MEF}$ be a \emph{classical-quantum state}. That is, $\rho_{ME}$ is quantum, but the register $\reg F$ contains a classical state. Thus, correlations shared between the two systems are classical only. The requirement for security is that the simulator should provide a classical output that equals the contents of $\reg F$, essentially just as well as the adversary can.
\item  \textbf{SEM3}. In \expref{Definition}{def:SEMthree}, we introduce a classical function~$f$, thus closely mimicking the classical definition. Namely, we specify as in SEM2 that $\reg F$ contains a classical state~$y$, which we furthermore assume to be precisely the results of any measurements used to generate~$\rho_{ME}$ (thus, $y$ is, in a sense, a full ``classical description'' of~$\rho_{ME}$). The requirement for security is that the simulator is able to output $f(y)$ (for any~$f$) with essentially the same probability as the adversary.
\end{itemize}

\subsection{Definition of Semantic Security}
 \label{sec:defining-semantic}

As before, we work primarily in the public-key setting; adaptation to the symmetric-key setting is again straightforward.
In our concrete formulation of~SEM (\expref{Definition}{def:SEM}), we define the following QPT machines: the \emph{message generator} $\M$ (which generates $\rho_{MEF}$), the  \emph{adversary} $\A$, the \emph{simulator} $\S$ and the \emph{distinguisher}~$\D$.
\begin{definition}\label{def:SEM} [SEM]
A qPKE scheme $(\KeyGen, \Enc, \Dec)$ is \emph{semantically secure} if for any QPT adversary $\A$, there exists a QPT simulator $\S$ such that for all QPTs $\M$ and $\D$,
\begin{equation*}
\left|\Pr \left[ \; \D\big\{ (\A \otimes \one_F) (\Enc_{pk} \otimes \one_{EF})\rho_{MEF} \big\} = 1 \; \right] -
\Pr \left[ \; \D \big\{ (\S \otimes \one_F)\rho_{EF} \big\} = 1 \; \right] \right|
\leq \negl(n)\,,
\end{equation*}
where $\rho_{MEF} \from \M(pk)$, $\rho_{EF} = \tr_M(\rho_{MEF})$, and the probability is taken over $(pk, sk) \leftarrow \KeyGen(1^n)$ and the internal randomness of \Enc, $\A$, $\S$ and $\D$.
\begin{itemize}
\item \textbf{\emph{SEM-CPA:}} In addition to the above, all QPTs are given oracle access to $\Enc_{pk}$.
\item \textbf{\emph{SEM-CCA1:}} In addition to IND-CPA, $\M$ is given oracle access to $\Dec_{sk}$.
\end{itemize}
\end{definition}

The interactions among the QPTs are illustrated in \expref{Figure}{fig:SEM}. A few remarks are in order. First, all the registers above are uniformly of size polynomial in~$n$. Second, the input and output registers of the relevant QPTs are understood from context, e.g., the expression $(\S \otimes \one_F) \rho_{EF}$ makes clear that the input register of~$\S$ is~$E$. Third, we note that SEM implies SEM-CPA in the public-key setting, since access to the public key  implies simulatability of~$\Enc_{pk}$. Finally, just as in the case of IND, adapting to the symmetric-key setting is simply a matter of setting $pk = sk$ and positing that $\M$ receives only a blank input.

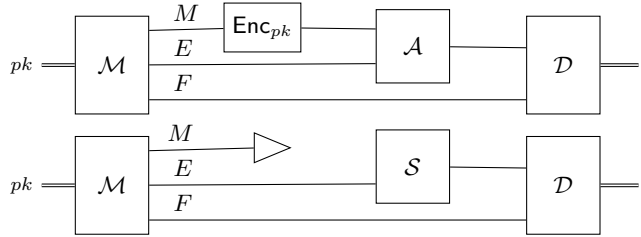
\begin{figure}
\begin{center}
\begin{tikzpicture}[auto, node distance=7em, >=latex']
\node at (-0.2,0) [input](input){{\scriptsize $pk$}};
\node at (1,0) [block, minimum height=4em](M){$\M$};
\node at (3, 0.5) [block](Enc){$\Enc_{pk}$};
\node at (5, 0.25) [block, minimum height=3em](A){$\A$};
\node at (7, 0) [block, minimum height=4em](D){$\D$};
\node at (8, 0) [output](output){};
\draw (input) edge[double] node {} (M)
 (M.43) edge node [pos=0.5] {$M$} (Enc)
 (M) edge node [pos=0.15] {$E$} (A.205)
 (M.317) edge node [pos=0.09] {$F$} (D.223)
 (Enc) edge node {} (A.155)
 (A) edge node {} (D.155)
 (D) edge[double] node {} (output);
\end{tikzpicture}

\begin{tikzpicture}[auto, node distance=7em, >=latex']
\node at (-0.2,0) [input](input){{\scriptsize $pk$}};
\node at (1,0) [block, minimum height=4em](M){$\M$};
\node at (3, 0.5) [trace](trace){};
\node at (5, 0.25) [block, minimum height=3em](A'){$\S$};
\node at (7, 0) [block, minimum height=4em](D){$\D$};
\node at (8, 0) [output](output){};
\draw (input) edge[double] node {} (M)
 (M.43) edge node [pos=0.3] {$M$} (trace)
 (M) edge node [pos=0.15] {$E$} (A'.205)
 (M.317) edge node [pos=0.09] {$F$} (D.223)
 (A') edge node {} (D.155)
 (D) edge[double] node {} (output);
\end{tikzpicture}
\caption{SEM: for all adversaries $\A$ there exists a simulator $\S$ such that these two scenarios are indistinguishable.}
\label{fig:SEM}
\end{center}
\end{figure}

The classical (uniform) definition of semantic security is recovered as a special case, as follows. All of the QPTs are PPTs, and the message generator $\M$ outputs classical plaintext $m$, side information $h(m)$ and target function $f(m)$. The distinguisher $\D$ simply checks whether the adversary's (or simulator's) output is equal to the contents of the $F$ register.

\subsection{Semantic Security is Equivalent to Indistinguishability}
\label{sec:IND-equiv-SEM}

While semantic security gives a strong and intuitively meaningful definition of security, indistinguishability is typically easier to prove and work with. In this section we show that---just as in the classical setting---the two notions are equivalent. This proves \expref{Theorem}{thm:intro:ind-equiv-sem}.
 The equivalence holds for all of the variants of \expref{Definition}{def:IND} and \expref{Definition}{def:SEM}: under either public or private-key, we have equivalence of IND with SEM, IND-CPA with SEM-CPA, and IND-CCA1 with SEM-CCA1. Here, we focus on the SEM definition; see \expref{Appendix}{app:alternate-Sem-Sec} for the equivalence with the SEM2 and SEM3 definitions.

\begin{theorem}[IND $\implies$ SEM]
\label{thm:ind''-implies-sem}
If a quantum encryption scheme \\* $(\KeyGen, \Enc, \Dec)$ has indistinguishable encryptions (IND), then it is semantically secure (SEM).
\end{theorem}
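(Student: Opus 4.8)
The plan is to follow the classical Goldwasser--Micali template: given a SEM adversary $\A$, the natural simulator is the one that runs $\A$ on an encryption of the fixed state $\egoketbra{0}_M$ in place of a genuine ciphertext. Concretely, I would define a QPT simulator $\S$ (depending only on $\A$) which, acting on the $E$ register, first prepares $\egoketbra{0}_M$, applies $\Enc_{pk}$ to it, and then runs $\A$ on the resulting ciphertext together with the input $E$ register. With this choice the ideal scenario $(\S \otimes \one_F)\rho_{EF}$ is \emph{exactly} equal to $(\A \otimes \one_F)\bigl(\Enc_{pk}(\egoketbra{0}_M) \otimes \rho_{EF}\bigr)$, which differs from the real scenario $(\A \otimes \one_F)(\Enc_{pk}\otimes\one_{EF})\rho_{MEF}$ only in that the plaintext $\rho_M$ has been swapped for $\egoketbra{0}_M$. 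Note that the quantifiers line up as \expref{Definition}{def:SEM} demands: $\S$ is fixed once $\A$ is fixed, independently of the message generator $\M$ and the distinguisher $\D$.

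Next, I would convert any pair $(\M, \D)$ witnessing a SEM advantage into an IND adversary. I take the IND message generator to run $\M$, treating the SEM register $M$ as the IND plaintext register while bundling the SEM registers $E$ and $F$ together into the single IND side-information register. I take the IND distinguisher to run $\A$ on the ciphertext and the $E$ part, then feed $\A$'s output together with the $F$ part into $\D$, and output $\D$'s bit. By construction, running this IND adversary on the genuine ciphertext reproduces the SEM real scenario, while running it on the encryption of $\egoketbra{0}_M$ reproduces the SEM ideal scenario. Hence the IND advantage of the constructed adversary equals the SEM distinguishing advantage \emph{exactly}, and IND security forces the latter to be $\negl(n)$. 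For the CPA and CCA1 flavours the same reduction goes through once one checks that the oracles are routed correctly: the $\Enc_{pk}$ oracle is handed to every machine, and the $\Dec_{sk}$ oracle used by the SEM message generator is precisely the one granted to the IND message generator.

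The step I expect to require the most care is verifying that $\S$ is a legitimate QPT of the required form. In particular one must check that it is not forced to copy or measure $\rho_E$ in any forbidden way---it is not, since $\rho_E$ is merely passed through to $\A$, so the no-cloning obstruction that complicates the \emph{definition} of semantic security does not bite in this direction---and that the encryption-of-zero step is genuinely available to it. This is exactly where the public-key assumption (or, in the CPA/CCA1 variants, the $\Enc_{pk}$ oracle) is used, and it is the point at which the plain private-key setting must be treated with slightly more attention. Beyond this, the argument is pure register bookkeeping: one tracks which physical systems play the roles of the IND plaintext, side information, and distinguisher input, and confirms that absorbing $\A$ and the $F$ register into the IND distinguisher introduces no loss. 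Because every equality above is exact rather than approximate, no hybrid argument or accumulation of errors is needed, and the final negligible bound is inherited directly from IND.
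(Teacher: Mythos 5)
Your proposal follows exactly the paper's route: the simulator that encrypts $\egoketbra{0}_M$ and runs $\A$ on the result, and the reduction that folds $\A$ and $\D$ into a single IND distinguisher (with $F$ routed to $\D$) while $\M$'s $E$ and $F$ registers are bundled into the IND side-information register. The core argument, including the observation that the real and ideal scenarios are reproduced exactly so that no hybrid argument is needed, coincides with the paper's proof.

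The one place you stop short is the plain symmetric-key setting, which you flag as needing ``slightly more attention'' but do not resolve --- and the theorem, per \expref{Section}{sec:IND-equiv-SEM}, is claimed for private-key schemes in the plain setting as well as the CPA/CCA1 variants. There your simulator is stuck: it has neither $pk$ nor an encryption oracle, so it cannot produce $\Enc_k(\egoketbra{0}_M)$ for the key $k$ actually in use. The paper's fix is simple and worth adding: $\S$ runs $\KeyGen(1^n)$ itself to obtain a fresh key $k'$ and outputs $\A\bigl((\Enc_{k'} \otimes \one_E)(\egoketbra{0}_M \otimes \rho_E)\bigr)$. Since $\egoketbra{0}_M$ is a fixed plaintext and $k$, $k'$ are identically distributed, the ciphertexts $\Enc_{k}\,\egoketbra{0}_M$ and $\Enc_{k'}\,\egoketbra{0}_M$ are identically distributed, so the simulator's output is unchanged and the reduction goes through verbatim. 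With that one observation your proof is complete for all the variants the theorem covers.
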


\begin{proof}
Suppose that an encryption scheme $(\KeyGen, \Enc, \Dec)$ has indistinguishable encryptions.  Let $\A$ be QPT SEM attacker against semantic security as in \expref{Definition}{def:SEM}. We define the QPT SEM simulator $\S$ as follows: $\S$ does not receive $\Enc_{pk}(\rho_M)$, but instead runs $\A$
 on input $(\Enc_{pk} \otimes \one_E) (\egoketbra{0} \otimes \rho_E)$ and outputs whatever $\A$ outputs. Let $\M$ be a QPT SEM message generator that outputs~$\rho_{MEF}$.

Assume for a contradiction the existence of a QPT SEM distinguisher $\D$ which successfully distinguishes the output of $\A$ from the output of $\S$ (with the help of register $F$), then the combination of $\A$ and $\D$
successfully distinguishes $(\Enc_{pk} \otimes I_{EF})\rho_{MEF}$ from $(\Enc_{pk} \otimes I_{EF}) (\egoketbra{0} \otimes \rho_{EF})$, hence contradicting the indistinguishability. \qed

\end{proof}

In the private-key setting without CPA oracle access, 
$\S$ runs $\KeyGen(1^n)$ to generate his own secret key $k'$, and then encrypts $\egoketbra{0^n}$ using $k'$ instead of $k$. The ciphertexts $\Enc_{k} \egoketbra{0}$ and $\Enc_{k'} \egoketbra{0}$ will be distributed identically since $k$ and $k'$ are. Hence, the success probability of the SEM simulator $\S$ does not change.

In case of CPA and CCA1 oracles, both for the public- and private-key setting, the simulator $\mathcal{S}$ forwards $\mathcal{A}$'s oracle queries to his own oracle(s), and $\mathcal{S}$ obtains $\mathcal{A}$'s input state by a call to his encryption oracle on state $\egoketbra{0}$, joined with his auxiliary information $\rho_E$.

\begin{theorem}[SEM $\implies$ IND]
\label{thm:sem-implies-ind''}
If a quantum encryption scheme \\* $(\KeyGen, \Enc, \Dec)$ is semantically secure (SEM),  then it has indistinguishable encryptions (IND).
\end{theorem}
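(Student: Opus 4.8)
The plan is to prove the contrapositive: I will compile any successful IND attack into a SEM attack that no simulator can match. So suppose the scheme is \emph{not} IND-secure and fix an IND pair $(\M_0,\D_0)$ with non-negligible advantage $\epsilon(n)$; writing $\rho_{ME}\from\M_0(pk)$ and $\rho_E=\tr_M\rho_{ME}$, I may assume without loss of generality that
$$\Pr\!\big[\D_0\{(\Enc_{pk}\otimes\one_E)\rho_{ME}\}=1\big]-\Pr\!\big[\D_0\{(\Enc_{pk}\otimes\one_E)(\egoketbra{0}_M\otimes\rho_E)\}=1\big]=\epsilon.$$

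Next I would assemble the three SEM machines from $(\M_0,\D_0)$. The SEM message generator $\M$ runs $\M_0(pk)$ to prepare a state on $ME$, flips a fair coin $c$, and records it classically in $F$: for $c=0$ it keeps $M$, and for $c=1$ it discards $M$ and reprepares it as $\egoketbra{0}$ while leaving the reduced state $\rho_E$ on $E$ untouched, so that
$$\rho_{MEF}=\tfrac12\,\rho_{ME}\otimes\egoketbra{0}_F+\tfrac12\,(\egoketbra{0}_M\otimes\rho_E)\otimes\ketbra{1}{1}_F.$$
The SEM adversary $\A$ is just $\D_0$: on the ciphertext register together with $E$ it runs $\D_0$ and outputs the resulting bit $v$. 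The SEM distinguisher $\D$ reads $v$ and the register $F$ (holding $c$) and outputs $1$ precisely when $v\neq c$, i.e.\ when the adversary's verdict correctly flags the ``real message'' branch.

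The heart of the argument is then a short bias computation comparing the two SEM scenarios. In the real scenario the encrypted $M$ is correlated with $c$, so $\A=\D_0$ tracks $c$ with bias $\epsilon$ and one gets $\Pr[\D=1]=\tfrac12+\tfrac{\epsilon}2$. In the ideal scenario the simulator sees only the $E$ part of $\rho_{EF}=\rho_E\otimes\tfrac12(\egoketbra{0}+\ketbra{1}{1})_F$; since the reduced state on $E$ equals $\rho_E$ in both branches, $\S$'s output is independent of $c$, and comparing it against the uniform bit stored in $F$ gives $\Pr[\D=1]=\tfrac12$ for \emph{every} $\S$. The resulting gap $\tfrac{\epsilon}2$ is non-negligible, contradicting SEM and completing the contrapositive.

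The step I expect to need the most care is justifying that the simulator is genuinely blind to $c$: this rests on the fact that the IND ``zero'' scenario is tailored so that tracing out $M$ leaves the \emph{same} side-information state $\rho_E$ regardless of $c$, whence $E$ and $F$ decouple in $\rho_{EF}$ and no strategy for $\S$ can beat $\tfrac12$. Finally I would remark that the reduction is uniform across the variants: for SEM-CPA $\Rightarrow$ IND-CPA and SEM-CCA1 $\Rightarrow$ IND-CCA1 the constructed machines simply forward their $\Enc_{pk}$ (and, for the message generator, $\Dec_{sk}$) queries to the provided oracles, and the symmetric-key case follows by setting $pk=sk$ and handing $\M$ a blank input.
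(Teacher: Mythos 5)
Your proof is correct and follows essentially the same route as the paper's: the same message generator that flips a coin recorded in $F$ and outputs either $\rho_{ME}$ or $\egoketbra{0}_M\otimes\rho_E$ accordingly, the same SEM adversary (the IND distinguisher itself), the same comparison-with-$F$ distinguisher, and the same key observation that any simulator is capped at $1/2$ because $E$ decouples from $F$. The only cosmetic difference is that you normalize the sign of the advantage by a without-loss-of-generality flip of $\D_0$, whereas the paper achieves the same effect by considering both attackers $\A$ and $\A\oplus 1$.
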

\begin{proof}
Let $(\M,\D)$ be an IND adversary such that $\D$ distinguishes $(\Enc_{pk} \otimes \one_{E})\rho_{ME}$ from $(\Enc_{pk} \otimes \one_{E}) (\egoketbra{0} \otimes \rho_{E})$ with advantage $\eps(n)$ if  $\rho_{ME} \from \M$. Let us consider the SEM message generator $\M'$ which runs $\rho_{ME} \from \M$ and outputs (with probability $\frac12$ each) either the state $\rho_{ME} \otimes \egoketbra{0}_F$ or the state $\egoketbra{0}_M \otimes \rho_{E} \otimes \egoketbra{1}_F$. Next we consider the SEM attacker $\A$ which runs $\D$ and outputs the classical bit that $\D$ outputs. We also consider the SEM attacker $\A \oplus 1$, which outputs the opposite bit. As SEM distinguisher, let us consider the procedure which compares $\A$'s output bit to a measurement (in the computational basis) of the qubit in register $F$. Any SEM simulator $\mathcal{S}$ that does not have access to the encrypted $M$-register has to guess the state of the random bit in $F$ and will be correct with probability $1/2$. Then $\eps(n)$ is twice the maximum of the advantages that $\A$ and $\A \oplus 1$ have in successfully predicting $F$ over $1/2$, the probability of success of any simulator. By SEM, both of these advantages are negligible, and hence so is $\eps(n)$.\qed
\end{proof}



\section{Quantum Encryption Schemes}
\label{sec:Constructions}

We now turn to the question of existence for encryption schemes for quantum data. We present two schemes based on the existence of classical functions which are difficult to invert for quantum computers. The first scheme (\expref{Section}{sec:construction-SKE}) is symmetric-key and IND-CCA1-secure; the second scheme (\expref{Section}{sec:CPA-public-key}) is public-key and IND-CPA-secure. By the  results of Section~\ref{sec:quantum-sem}, these schemes are also semantically secure.

\subsection{Quantum Symmetric-Key Encryption from One-Way Functions}
\label{sec:construction-SKE}

In this section, we prove \expref{Theorem}{thm:intro:SKE}: \emph{\theoremStatementqOWF}

The proof proceeds in two steps. First, we define quantum-secure one-way functions (qOWFs) and  quantum-secure pseudo-random functions (qPRFs); we can argue as in the classical world that qPRFs exist if qOWFs do (\expref{Theorem}{thm:qOWFsimpliesqPRFs}.) 
Second, we show that any qPRF can be used to construct an explicit IND-CCA1-secure symmetric-key scheme for quantum data.

\medskip
We begin with the formal definitions of qOWFs and qPRFs, and a statement of the result connecting the two.

\begin{definition}\label{def:qOWF}
A PT-computable function $f : \bit{*} \rightarrow \bit{*}$ is a \emph{quantum-secure one-way function (qOWF)} if for every QPT $\mathcal A$,
\begin{equation*}
\underset{x \inrand \bit{n}}{\pr}[\mathcal A(f(x), 1^n) \in f^{-1}(f(x))] \leq \negl(n)\,.
\end{equation*}
\end{definition}

\begin{definition}\label{def:qPRF}
A PT-computable function family $f : \bit{n} \times \bit{m} \rightarrow \bit{\ell}$ is a \emph{quantum-secure pseudorandom function (qPRF)} if for every QPT $\mathcal D$ equipped with a classical oracle,
\begin{equation*}
\left|\underset{k \inrand \bit{n}}{\pr}[\mathcal D^{f_k}(1^n) = 1]~~~ - \underset{g \inrand \{\bit{m} \rightarrow \bit{\ell}\}}{\pr}[\mathcal D^g(1^n) = 1] \right | \leq \negl(n)\,.
\end{equation*}
\end{definition}

\noindent We remark that, to some readers, the restriction to classical oracles might seem artificial. While one can certainly consider functions with the \emph{stronger} guarantee of resistance to quantum adversaries with quantum oracle access, stronger functions are not necessary to establish our results. We thus opt for the weaker primitive. In either case, the following holds.

\begin{theorem}
\label{thm:qOWFsimpliesqPRFs}
If qOWFs exist, then qPRFs exist.
\end{theorem}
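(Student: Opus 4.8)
The plan is to follow the classical two-step route from one-way functions to pseudorandom functions, inserting a quantum-secure pseudorandom generator (qPRG) as an intermediate primitive, and to observe at each step that the underlying classical security reduction is \emph{black-box} in its adversary and therefore preserves security against QPT adversaries. Concretely, I would first define a qPRG to be a PT-computable length-increasing function $G : \bit{n} \to \bit{\ell(n)}$ with $\ell(n) > n$ such that for every QPT $\D$,
\[
\left| \underset{x \inrand \bit{n}}{\pr}[\D(G(x)) = 1] - \underset{y \inrand \bit{\ell(n)}}{\pr}[\D(y) = 1] \right| \leq \negl(n),
\]
and then establish (i) qOWF $\implies$ qPRG and (ii) qPRG $\implies$ qPRF.

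For step (i), I would invoke the classical construction of a pseudorandom generator from an arbitrary one-way function \cite{Goldreich2004}. The essential point is that the security proof is a reduction which, given a distinguisher against $G$, runs it as a subroutine on classical inputs and performs only (classical) polynomial-time post-processing to produce a candidate preimage under $f$. Since a QPT distinguisher takes classical input and can be re-run independently on freshly prepared inputs, wrapping it inside this classical reduction yields a QPT inverter for $f$: no rewinding of, nor access to, the distinguisher's internal quantum register is ever required. Hence a qOWF $f$ yields a $G$ that no QPT can distinguish from uniform, i.e.\ a qPRG. (The reduction may be non-uniform, but the paper already notes that all results carry over to the non-uniform setting.)

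For step (ii), I would use the Goldreich--Goldwasser--Micali (GGM) tree construction, which turns a length-doubling qPRG into the keyed family $f_k(x_1 \cdots x_m) = G_{x_m}(\cdots G_{x_1}(k) \cdots)$, adjusting the stretch of $G$ to match the desired output length $\ell$. Zhandry \cite{Zhandry2012} showed that, when the underlying generator is secure against QPT distinguishers, GGM yields a family indistinguishable from random even against adversaries making quantum \emph{superposition} queries. Since \expref{Definition}{def:qPRF} only demands security against a QPT equipped with a \emph{classical} oracle---a strictly weaker requirement---the qPRF property follows a fortiori. Alternatively, one can run the classical GGM hybrid argument directly, which is again black-box in both the distinguisher and the generator and hence preserves QPT security against classical queries. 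Composing (i) and (ii) gives the theorem.

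I expect step (i) to be the main obstacle. The GGM step is essentially quoted from \cite{Zhandry2012} and is clean, but the HILL-style construction and its analysis are intricate, and the real work lies in verifying that its reduction is genuinely black-box in the distinguisher---that it only ever feeds the distinguisher classical strings and aggregates classical outputs, never needing to clone or rewind a quantum state. Once this structural observation is in place, quantum security transfers verbatim, and the result reduces to citing the classical constructions.
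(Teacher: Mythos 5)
Your proposal takes essentially the same route as the paper: the paper's proof is precisely the observation that, because \expref{Definition}{def:qPRF} requires security only against QPT adversaries with \emph{classical} oracle access, the classical OWF\,$\rightarrow$\,PRG\,$\rightarrow$\,PRF reductions carry through~\cite{HAstad:1999:PGA:312173.312213,Goldreich1986}, together with the same remark that Zhandry~\cite{Zhandry2012} handles the stronger superposition-query setting. You merely spell out in more detail the black-box, classical-query structure of those reductions that the paper invokes implicitly, so your argument is correct and matches the paper's.
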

Since our definitions are in terms of \emph{classical} oracles, the classical proof that shows that qOWFs imply qPRFs carries through~\cite{HAstad:1999:PGA:312173.312213,Goldreich1986} . We remark that Zhandry~\cite{Zhandry2012} extended this result to the case of functions secure against quantum superposition queries, what he calls ``quantum-secure PRFs.''
It should be noted that the proof of the Theorem~\ref{thm:qOWFsimpliesqPRFs} actually implies the existence of a qPRF for any (polynomial) choice of the parameters $m$ and $\ell$ in \expref{Definition}{def:qPRF}. 

We are now ready to proceed with the second part of the proof of \expref{Theorem}{thm:intro:SKE}, namely the construction of an encryption scheme from a given qPRF. Essentially, this scheme encrypts a quantum state~$\rho$ by first selecting a random string $r$, then inputing~$r$ into a qPRF; the output $f_{k(r)}$ is then used as an encryption key for the quantum one-time pad, $P_{f_{k(r)}}$.

\begin{scheme}\label{defi:SKE-from-qPRF}
Let $f : \bit{n} \times \bit{2n} \rightarrow \bit{2n}$ be a qPRF. Let qPRF-SKE be the following triple of QPT algorithms:
\begin{enumerate}
\item (key generation) $\KeyGen(1^n)$: output $k \inrand \bit{n}$;
\item (encryption) $\Enc_k(\rho)$: choose $r \inrand \bit{2n}$ and output $\ket{r}\bra{r} \otimes P_{f_k(r)}\rho P_{f_k(r)}$.
\item (decryption) $\Dec_k(\sigma):$ measure the first $2n$ qubits in the computational basis to obtain $r' \in \bit{2n}$; apply $P_{f_k(r')}$ to remaining $2n$ qubits and output the result.
\end{enumerate}
\end{scheme}
For simplicity, we chose $\states(\H_n)$ for the key space and the plaintext space, and $\states(\H_{2n})$ for the ciphertext space; we can easily adapt the above to other polynomially-related cases by selecting a qPRF with different parameters.
Correctness of Scheme~\ref{defi:SKE-from-qPRF} is easily verified:
\begin{equation*}
\Dec_k (\Enc_k (\rho))
= \Dec_k \bigr(\egoketbra{r} \otimes P_{f_k(r)}\rho P_{f_k(r)}\bigl)
= P_{f_k(r)} P_{f_k(r)}\rho P_{f_k(r)} P_{f_k(r)}
= \rho\,,
\end{equation*}
where the second equality follows from the definition of the decryption function and the last step is due to the fact that the Pauli operators are self-inverse. Next, we show that the scheme is secure against non-adaptive chosen ciphertext attacks. The classical version of this result is standard, and we use essentially the same proof; see, e.g., Proposition 5.4.18 in Goldreich's textbook~\cite{Goldreich}.

\begin{lemma}
\label{lem:qPRF-implies-SKE}
If $f$ is a qPRF, then \expref{Scheme}{defi:SKE-from-qPRF} is an IND-CCA1-secure symmetric-key quantum encryption scheme as defined in \expref{Definition}{def:IND}.
\end{lemma}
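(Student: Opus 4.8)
The plan is to follow the standard two-stage template for proving security of encryption built from a pseudorandom function, instantiated here with the quantum one-time pad. The crucial structural observation, which is what makes the classical-oracle qPRF of Definition~\ref{def:qPRF} sufficient, is that every operation the challenger performs in the IND-CCA1 experiment evaluates $f_k$ only on \emph{classical} inputs. Indeed, each encryption query samples a classical $r \inrand \bit{2n}$ and computes $f_k(r)$; the challenge encryption likewise samples a fresh classical $r^*$ and computes $f_k(r^*)$; and each decryption query first measures the first $2n$ qubits in the computational basis to obtain a classical $r'$ before computing $f_k(r')$. Hence the entire experiment---the message-generator phase with its $\Enc_k$ and $\Dec_k$ oracles, the challenge, and the distinguisher phase with its $\Enc_k$ oracle---can be carried out by a single QPT that accesses $f_k$ through a classical oracle only.

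First I would run a hybrid argument to replace the qPRF by a truly random function. Concretely, I construct a qPRF distinguisher $\D'$ that is given classical oracle access to an unknown function---either $f_k$ for a random key $k$, or a uniformly random $g \inrand \{\bit{2n} \rightarrow \bit{2n}\}$---and that internally simulates the whole IND-CCA1 experiment against the adversary $(\M,\D)$, answering every encryption and decryption query and forming the challenge ciphertext by calling its own oracle, and finally outputs $1$ exactly when the simulated $\D$ correctly guesses the challenge bit. By Definition~\ref{def:qPRF}, the two output probabilities of $\D'$ differ by at most $\negl(n)$, so the adversary's distinguishing advantage when the scheme uses $f_k$ differs by at most $\negl(n)$ from its advantage when the scheme uses the random function $g$.

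Next I would analyze the idealized scheme in which $f_k$ is replaced by the random $g$, and show that the advantage there is negligible. The challenge one-time-pad key is $g(r^*)$ for the freshly sampled string $r^*$. Let $\mathsf{bad}$ be the event that $r^*$ equals one of the (polynomially many) strings $r$ drawn during encryption queries or one of the strings $r'$ produced by decryption queries. Since CCA1 grants decryption access only \emph{before} the challenge and every encryption call draws independent fresh randomness, $r^*$ is sampled independently of all these values, so a union bound over the $\poly(n)$ queries gives $\pr[\mathsf{bad}] \leq \poly(n) / 2^{2n} = \negl(n)$. Conditioned on the complement of $\mathsf{bad}$, the value $g(r^*)$ is uniformly random on $\bit{2n}$ and independent of the adversary's entire view; averaging the challenge $P_{g(r^*)} \sigma P_{g(r^*)}$ over this uniform key therefore yields the maximally mixed state $\one_n/2^n$, independently of whether $\sigma$ is the true plaintext or $\egoketbra{0}$. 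Thus, conditioned on the good event, the two challenge scenarios are literally identical and $\D$ has zero advantage, so the total advantage in the idealized scheme is at most $\pr[\mathsf{bad}] = \negl(n)$.

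Combining the two stages by the triangle inequality bounds the adversary's total advantage by $\negl(n)$, which establishes IND-CCA1 security. I expect the main obstacle to be the bookkeeping that isolates $g(r^*)$ from the adversary's view: one must check that neither the encryption oracle---which conceals $g(r)$ inside the Pauli conjugation and only ever exposes the classical tag $r$---nor the CCA1 decryption oracle can force an evaluation of $g$ at the not-yet-sampled point $r^*$, so that the freshness of $g(r^*)$, and hence the perfect hiding of the QOTP, genuinely applies. The non-adaptivity of CCA1, namely the absence of decryption access after the challenge, is precisely what makes this argument go through, and is also why extending it to CCA2 is subtle and left open.
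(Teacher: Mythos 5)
Your proposal is correct and follows essentially the same route as the paper's proof: a reduction that simulates the entire IND-CCA1 experiment using only classical oracle access to $f_k$ (possible because encryption, decryption, and the challenge all evaluate $f$ on classical strings), combined with an ideal-world analysis in which a collision/union bound shows the truly random value $g(r^*)$ is fresh, so the quantum one-time pad renders the two challenge scenarios identical. The only differences are presentational: the paper analyzes the ideal world first via a strengthened adversary given $(r_i, f(r_i))$ pairs, whereas you do the hybrid step first and bound the bad event directly inside the simulation.
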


\begin{proof}
First, we analyse the security of the scheme in an idealized scenario where $f$ is a truly random function. 
We claim that in this case, $\alg A$ correctly guesses the challenge with probability at most $1/2 + \negl(n)$
(see \expref{Definition}{def:IND'}).
 In fact, this bound holds for a stronger adversary $\alg A'$, who has access to a classical oracle for~$f$ prior to the challenge, and access to polynomially-many pairs $(r_i, f(r_i))$ for random $r_i, 1 \leq i \leq q$, after the challenge. This adversary is stronger than $\alg A$ since it can simulate~$\alg A$ by implementing $\Enc_f$ and $\Dec_f$ oracles using its $f$ oracles. Since the input $r$ into $f$ in the challenge ciphertext is uniformly random, the probability that any of the polynomially-many oracle calls of $\alg A'$ uses the same $r$ is negligible. In the case that no oracle calls use $r$, the mixtures of the inputs to $\alg A'$ (including the pairs $(r_i, f(r_i))$) are the same for the original challenge and the zero challenge. This fact can be verified by first averaging over the values of $f(r)$: since $f$ is uniformly random, $f(r)$ is also uniformly random as well as independent of the other values of $f$. In both cases, applying the quantum one-time pad results in the state:
\begin{equation*}
\egoketbra{r} \otimes \dfrac{1}{2^n} \one \otimes \rho_E \otimes \egoketbra{r_1} \otimes \egoketbra{f(r_1)} \otimes \dots \otimes \egoketbra{r_q} \otimes \egoketbra{f(r_q)},
\end{equation*}
and indistinguishability follows.

Next, we consider the case that $f$ is a pseudorandom function. We show that a successful IND-CCA1 adversary $\alg A$ (i.e., one that distinguishes challenges with better than negligible probability) can be used to construct a successful $f$-adversary $\alg A_0$ (i.e., one that distinguishes $f$ from random with non-negligible probability.) The adversary $\alg A_0$ is a QPT with classical oracle access to a function $\varphi : \bit{2n} \rightarrow \bit{2n}$, and aims to output $0$ if $\varphi$ is perfectly random and $1$ if $\varphi = f_k$ for some $k$. Define the simulated oracles
$$
\Enc_\varphi : \rho \mapsto \Bigl(r, \,P_{\varphi(r)} \rho P_{\varphi(r)}\Bigr)
\text{ for }r \inrand \bit{2n}
\quad \text{and} \quad
\Dec_\varphi : \ket{r'}\bra{r'} \otimes \rho \mapsto P_{\varphi(r')} \rho P_{\varphi(r')}\,,
$$
where, as before, we assume that $\Dec_\varphi$ measures the first register before decrypting the second. Note that if $\varphi = f_k$ then these are exactly the encryption and decryption oracles (with key $k$) of the qPRF-SKE scheme.

The QPT $\alg A_0^\varphi$ proceeds as follows. First, it simulates $\alg A$, and replies to its queries to the encryption oracle with $\Enc_\varphi$ and its queries to the decryption oracle with $\Dec_\varphi$. When it transmits the challenge, $\alg A_0^\varphi$ replies with either the encryption of the challenge, or the encryption of $\ket{0^n}\bra{0^n}$, each with probability $1/2$. If $\alg A$ responds correctly, $\alg A_0^\varphi$ outputs $1$; otherwise it outputs $0$. If $\varphi = f_k$ then we have exactly simulated the IND-CCA1 game with adversary $\alg A$; in that case, since $\alg A$ is IND-CCA1-breaking, $\alg A_0^\varphi$ outputs $1$ with probability at least $1/2 + 1/p(n)$ for some polynomial~$p$, for infinitely many $n$.

We conclude that
\begin{equation*}
\left|\underset{k \inrand \bit{n}}{\pr}[\mathcal A_0^{f_k}(1^n) = 1]~~~ - \underset{\varphi \inrand \{\bit{2n} \rightarrow \bit{2n}\}}{\pr}[\mathcal A_0^\varphi(1^n) = 1] \right | \geq 1/p(n) - \negl(n)\,,
\end{equation*}
for infinitely many $n$, i.e., $f$ is not a qPRF.\qed\end{proof}

Putting together \expref{Theorem}{thm:qOWFsimpliesqPRFs} and \expref{Lemma}{lem:qPRF-implies-SKE}, we arrive at a proof of \expref{Theorem}{thm:intro:SKE}.

\subsection{Quantum Public-Key Encryption from Trapdoor Permutations}
\label{sec:CPA-public-key}
For the construction of public-key schemes, we will need qOWFs with an additional property: the existence of \emph{trapdoors} which enable efficient inversion. Following the classical approach of Diffie and Hellman~\cite{hellman1976new}, we set down the notion of a quantum-secure trapdoor one-way permutation (or qTOWP), and then show how to use any qTOWP to construct IND-CPA secure public-key encryption schemes for quantum data. This will establish \expref{Theorem}{thm:intro:PKE}: \emph{\theoremStatementqTOWP}.

We begin with a definition of qTOWPs. We require a slight (but standard) variation of \expref{Definition}{def:qOWF}, namely the notion of a quantum-secure one-way permutation (or qOWP). A qOWP is a qOWF whose input domains are sets $D_i$; moreover, the function restricted to any such domain must be a permutation (from the domain to the corresponding range.) When we augment such a qOWP with trapdoors, we arrive at the following definition.
\begin{definition}\label{def:qTOWP}
A \emph{quantum-secure trapdoor one-way permutation (qTOWP)} is a qOWF
$$
\{f_i: D_i \rightarrow \bit{*}\}_{i \in I}
$$
(where each $f_i$ is a bijection), together with a triple of PPTs $(\alg G, \alg S, \alg I)$ which
\begin{enumerate}
\item (generate (index, trapdoor) pair) $\supp \alg G(1^n) \subseteq (I \cap \bit{n}) \times \bit{n}$;
\item (sample from domain) for all $i \in I$, $\supp \alg S(i) = D_i$;
\item (invert using trapdoor) for all $(i, t) \in \supp \alg G(1^n)$ and all $x \in D_i$, \mbox{$\alg I(f_i(x), t) = x$.}
\end{enumerate}
\end{definition}

Before we can describe the public-key scheme and prove its security, we need two additional (well-known) primitives which can be constructed from any qOWP, with or without trapdoors. The first is a quantum-secure ``hard-core'' predicate, which is a ``yes'' or ``no'' question about inputs $x$ which is difficult to answer if one only knows $f(x)$.

\begin{definition}\label{def:hard-core}
A PT-computable $b:\bit{*} \rightarrow \{0, 1\}$ is a \emph{hard-core} of a qOWP $f$ if for every QPT $\alg A$,
$$
\underset{x \inrand \bit{n}}{\pr} \left[\alg A (f(x), 1^n) = b(x) \right] \leq \frac{1}{2} + \negl(n)\,.
$$
\end{definition}

\begin{theorem}
(\cite{AdcockCleve2002}, quantum analogue of~\cite{GoldreichLevin89}) If qOWPs exist, then qOWPs with hard-cores exist.
\end{theorem}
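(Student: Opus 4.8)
The plan is to follow the classical Goldreich--Levin construction and to argue that its reduction survives against quantum adversaries, which is exactly the content of Adcock and Cleve~\cite{AdcockCleve2002}. Given a qOWP $f$, I would first build a new function on double-length inputs, $g(x,r) := (f(x), r)$ for $x, r \in \bit{n}$, and propose the inner-product predicate $b(x,r) := \innerprod{x}{r} \bmod 2 \in \{0,1\}$. The first (routine) step is to verify that $g$ is again a qOWP: each $g_i$ is a bijection because $f$ is and the second block is copied verbatim, and any QPT inverting $g$ on $(f(x), r)$ immediately yields a QPT inverting $f$ on $f(x)$ (discard $r$, read off the recovered $x$), contradicting the one-wayness of $f$. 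It therefore suffices to show that $b$ is a hard-core of $g$, so that $(g, b)$ is the desired qOWP with a hard-core.

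For the main step I would argue by contradiction. Suppose there is a QPT $\alg A$ and a polynomial $p$ with
$$\underset{(x,r) \inrand \bit{2n}}{\pr}\left[ \alg A(f(x), r) = b(x,r) \right] \geq \tfrac{1}{2} + \tfrac{1}{p(n)}$$
for infinitely many $n$. I would then build a QPT inverter $\alg B$ for $f$. By a Markov (averaging) argument, for at least a $\tfrac{1}{2p(n)}$-fraction of inputs $x$ the predictor guesses $b(x, \cdot)$ correctly with probability at least $\tfrac{1}{2} + \tfrac{1}{2p(n)}$ over a uniform $r$ and over its own measurements; call such $x$ \emph{good}. On a good $x$, recovering $x$ from approximate access to the Hadamard coefficient $r \mapsto \innerprod{x}{r}$ is precisely Goldreich--Levin list decoding: on input $y = f(x)$, the inverter $\alg B$ selects a small collection of pairwise-independent query points $r_j$, runs $\alg A$ on each classical input $(y, r_j)$, reconstructs a polynomial-size list of candidates for $x$ (one of which equals $x$ with constant probability), and then tests each candidate $\tilde x$ by checking $f(\tilde x) = y$, outputting any match. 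This inverts $f$ on a non-negligible fraction of inputs, contradicting \expref{Definition}{def:qOWF}.

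The point that makes this go through quantumly --- and the step I would treat most carefully --- is that $\alg A$ is only ever invoked on \emph{classical} inputs $(y, r_j)$. Such inputs can be freely copied and re-prepared, so there is no no-cloning obstruction to running $\alg A$ the polynomially many times that list decoding requires, and, by the convention that probabilities for a QPT are taken over its internal measurements, each invocation has a well-defined, identically distributed success probability. Consequently the whole probabilistic analysis of the classical reduction --- the Markov step isolating good $x$, and the Chebyshev/pairwise-independence bound on the list size and reconstruction error --- carries over verbatim, with $\alg A$'s measurement randomness playing the role of the classical predictor's coins; this is exactly the quantum Goldreich--Levin theorem of~\cite{AdcockCleve2002}, and invoking it closes the argument. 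The genuine subtlety, as opposed to routine calculation, is to confirm that treating $\alg A$ as a black box whose success is averaged over measurements hides no adaptivity or cross-query entanglement; since the queries are non-adaptive and on disjoint, fresh classical inputs, no such correlations can arise, and the reduction is sound.
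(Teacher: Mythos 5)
Your proposal is correct, but it takes a genuinely different route from the one the paper relies on. The paper offers no proof of its own: it cites Adcock and Cleve~\cite{AdcockCleve2002}, whose quantum Goldreich--Levin theorem is proved by a \emph{coherent} reduction --- the predictor circuit is purified and queried in superposition (essentially Bernstein--Vazirani plus amplitude amplification, using both the predictor's unitary and its inverse), which recovers $x$ with only $O(1/\eps)$ uses of the predictor, where $\eps$ is its advantage. You instead rerun the \emph{classical} Goldreich--Levin list-decoding reduction, invoking the quantum predictor $\alg A$ only on classical inputs, and argue that the probabilistic analysis survives; this is sound, and the key point you isolate is the right one: each invocation of a QPT on a fresh classical input is an independent execution of a fixed circuit, so the correctness indicators for pairwise-independent query points (with fresh measurement randomness per query) remain pairwise independent, and the Markov and Chebyshev steps go through with measurement randomness playing the role of classical coins. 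What each route buys: yours is more elementary and self-contained, and it needs only classical black-box access to the adversary --- exactly in the spirit of the paper's other arguments, e.g.\ the classical-oracle formulation of qPRFs in \expref{Definition}{def:qPRF} and the proof of \expref{Lemma}{lemma:prg-construction}, where a quantum distinguisher is likewise only ever invoked on classical inputs; the Adcock--Cleve route buys a substantially better query complexity ($O(1/\eps)$ versus the classical $O(n/\eps^{2})$-type cost), at the price of requiring unitary access to the predictor, which is available here only because the uniform reduction possesses the adversary's circuit description. One small correction: your closing claim that the classical-query reduction you describe ``is exactly the quantum Goldreich--Levin theorem of~\cite{AdcockCleve2002}'' is a mis-attribution --- their contribution is precisely that the coherent reduction beats the classical one; what you have written is the classical proof of~\cite{GoldreichLevin89} adapted to a quantum predictor, which independently suffices to establish the theorem.
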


\noindent The other primitive we need is a quantum-secure pseudorandom generator, which is defined below. The classical proof that hard-cores imply pseudorandom generators carries over with little modification (see \expref{Lemma}{lemma:prg-construction}).

\begin{definition}\label{def:qPRG}
A PT-computable deterministic function $G : \bit{n} \rightarrow \bit{m}$ is a \emph{quantum-secure pseudorandom generator (qPRG)} if for every QPT $\mathcal D$,
$$
\left|\underset{s \inrand \bit{n}}{\pr}[\mathcal D(G(s)) = 1]~~~ - \underset{y \inrand \bit{m}}{\pr}[\mathcal D(y) = 1] \right | \leq \negl(n)\,.
$$
\end{definition}

\begin{lemma}\label{lemma:prg-construction}
Suppose $f$ is a qOWP, $b$ its hard-core predicate, and let $t$ be polynomial in $n$.  Then
$G: s \mapsto b(f^{t-1}(s)) b(f^{t-2}(s)) \dots b(s)$ is a qPRG.
\end{lemma}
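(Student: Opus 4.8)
The plan is to run the standard Blum--Micali/Yao hybrid argument over the $t$ output bits, reducing any successful distinguisher for $G$ to a successful predictor for the hard-core bit $b$, and then to check that nothing in this argument is obstructed by moving to the quantum setting. The two structural facts I would isolate first are: (i) since $f$ is a permutation, $s \inrand \bit{n}$ implies that each iterate $f^{j}(s)$ is again uniform, so every "seed point" along the orbit is uniformly distributed; and (ii) with the \emph{given} ordering $G(s) = b(f^{t-1}(s))\, b(f^{t-2}(s)) \cdots b(s)$, the leading output bits are exactly the ones obtainable from a later orbit point by iterating $f$ \emph{forward}. Concretely, from $y = f(x)$ one can compute $b(f^{j}(y))$ for any $j$, and these are precisely the output bits lying to the \emph{left} of the bit $b(x)$. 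This is the feature that makes the reduction efficient (no $f^{-1}$ is ever needed), and it is worth flagging that the direction of the ordering is what makes the lemma go through.

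Next I would set up the hybrids. Define $H_k$ (for $k = 0, \dots, t$) to be the distribution on $\bit{t}$ in which a uniform seed $s$ is drawn, the first $k$ positions are generated honestly as $b(f^{t-1}(s)), \dots, b(f^{t-k}(s))$, and the remaining $t-k$ positions are filled with independent uniform bits. Then $H_0$ is the uniform distribution $U_t$ and $H_t$ is $G(U_n)$, so a QPT distinguisher $\D$ with advantage $\delta := \delta(n)$ against $G$ distinguishes $H_0$ from $H_t$ with advantage $\delta$. By the triangle inequality there is an index $k \in \{1, \dots, t\}$ with $\bigl|\pr[\D(H_k)=1] - \pr[\D(H_{k-1})=1]\bigr| \geq \delta/t$; since $t$ is polynomial in $n$, this adjacent advantage is non-negligible whenever $\delta$ is. Note that $H_{k-1}$ and $H_k$ differ only in position $k$: the bit $b(f^{t-k}(s))$ is honest in $H_k$ and uniformly random in $H_{k-1}$.

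The reduction then builds a QPT hard-core predictor $\A$ from $\D$ at the critical index $k$. On input $(f(x), 1^n)$ with $x \inrand \bit{n}$, think of $x = f^{t-k}(s)$, so that $b(x)$ is the bit at position $k$ and $f(x)$ equals $f^{t-k+1}(s)$. The predictor computes the honest leading bits at positions $1, \dots, k-1$ as $b(f^{k-2}(f(x))), \dots, b(f^{0}(f(x)))$, samples a guess $c \inrand \{0,1\}$ for position $k$, fills positions $k+1, \dots, t$ with fresh uniform bits, runs $\D$ once on the resulting classical string, and outputs $c$ if $\D$ returns $1$ and $\bar c$ otherwise (after possibly fixing the sign of the advantage by flipping $\D$'s output). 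The routine probability calculation, writing $p_k = \pr[\D(H_k)=1]$ and conditioning on whether $c = b(x)$, gives $\pr[\A(f(x),1^n) = b(x)] = \tfrac12 + (p_k - p_{k-1})$, so $\A$ predicts $b(x)$ with advantage at least $\delta/t$ over $\tfrac12$. Since $b$, $f$ and its polynomially-many iterates are all classically PT-computable, $\A$ is a legitimate QPT, contradicting the hard-core property (\expref{Definition}{def:hard-core}); hence $\delta$ is negligible and $G$ is a qPRG.

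I expect the main point requiring care to be verifying that the reduction survives quantization, rather than any new calculation. The key observation is that $\D$ receives purely \emph{classical} strings, so no-cloning is irrelevant: the predictor may freely copy and recompute the candidate PRG output, and it invokes $\D$ exactly once, so no rewinding of a quantum machine is involved. Consequently the entire advantage analysis is ordinary probability reasoning about acceptance probabilities, and it remains valid even though $\D$ and $\A$ perform internal measurements. The only genuinely quantum-specific bookkeeping is ensuring that $\A$ faithfully simulates $\D$ as a black-box QPT subroutine; once that is granted, the classical hybrid argument transfers verbatim, which is exactly why the lemma can claim that the classical proof "carries over with little modification."
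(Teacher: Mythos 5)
Your proposal is correct and follows essentially the same route as the paper's (sketched) proof: a hybrid argument over the $t$ output bits that converts a QPT distinguisher for $G$ into a hard-core predictor, exploiting that $f$ is a permutation (so all iterates of a uniform seed are uniform) and that the chosen bit ordering makes the leading bits forward-computable from $f(x)$ without inversion, while noting that $\D$ is only ever run on classical inputs so the classical reduction survives quantization. Your write-up simply makes explicit the steps the paper leaves implicit (the uniform-suffix hybrids, the random guess $c$, and the $\tfrac12 + (p_k - p_{k-1})$ calculation), and both treatments share the same standard, glossed-over technicality of fixing the critical hybrid index.
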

\begin{proof}[Sketch] The proof proceeds almost identically as in the classical case (see, e.g., ~\cite{Goldreich2004}.) Let $\mathcal D$ be a quantum adversary that distinguishes $G(U_n)$ from uniform. Note that, as stated in Definition~\ref{def:qPRG}, $\mathcal D$ gets only classical bitstring outputs from the pseudorandom generator. In the classical proof, one constructs an adversary $\mathcal A$ which uses $\mathcal D$ as a black-box subroutine, and breaks the hard-core of $f$. We use the exact same $\mathcal A$ now; in particular, we only need to invoke $\mathcal D$ on classical inputs and read out its (post-measurement) classical outputs ($0$ or $1$). Of course, by virtue of needing to invoke $\mathcal D$, $\mathcal A$ itself will now be a QPT.

In slightly greater detail, we use a standard hybrid argument to give a ``predictor" algorithm $\mathcal A$ that, for some index $i\leq t$, can predict the $i+1$\textsuperscript{st} bit of $G(U_n)$, given as input the first $i$ bits of the output of $G$. $\mathcal A$ succeeds with non-negligible advantage over random, i.e., the probability over $s$ that $\mathcal A(b(f^{t-1}(s))\dots b(f^{t-i}(s)))$ outputs $b(f^{t-(i+1)}(s)$ is at least $1/2+1/p(n)$ where $p(n)$ is some polynomial. Crucially, since $f$ implements a permutation over $\bit{n}$, we have that $b(f^{i-1}(U_n))\dots b(U_n)$ is distributed identically to \\* $b(f^{t-1}(U_n))\dots b(f^{t-i}(U_n))$.  Therefore, given uniform $x$, and $y=f(x)$, we can use the output of the predictor, $A(b(f^{i-1}(y))\dots b(y))=A(b(f^{i}(x))\dots b(f(x)))$ to predict $b(x)$ with non-negligible advantage, in violation of the security guarantee of the hard-core predicate.
\qed\end{proof}

We now have all of the ingredients needed to describe a public-key scheme for encrypting quantum data.
\begin{scheme}\label{def:PKE-from-qTOWF}
Let $f$ be a qTOWP, and let $b$ and $G : \bit{n} \rightarrow \bit{2n}$ be a corresponding hard-core and qPRG, respectively. Let qTOWP-PKE be the following triple of algorithms:
\begin{enumerate}
\item ((public, private) key-pair generation) $\KeyGen(1^n)$: output $\alg G(1^n) = (i, t) \in \bit{n} \times \bit{n}$;
\item (encryption with public key)  $\Enc_i(\rho)$:
\begin{itemize}
\item apply $\alg S(i)$ to select $d \in D_i$, and compute $r := G(d)$;
\item output $\ket{f_i^{2n}(d)}\bra{f_i^{2n}(d)} \otimes P_r \rho P_r$
\end{itemize}
\item (decryption with private key) $\Dec_t(\ket{s}\bra{s} \otimes \sigma):$
\begin{itemize}
\item for $j = 1, \dots, 2n$, apply $b \circ (\alg I)^j$ to $(s, t)$; concatenate the resulting bits to get $u \in \bit{2n}$;
\item output $P_u \sigma P_u$.
\end{itemize}
\end{enumerate}
\end{scheme}

Correctness of the scheme is straightforward; fix a key-pair $(i, t)$, a randomly sampled $d \in D_i$, and the corresponding $r$. Then
$$
\Dec_t (\Enc_i(\rho))
= \Dec_t\bigl(\ket{f_i^{2n}(d)}\bra{f_i^{2n}(d)} \otimes P_r \rho P_r\bigr)
= P_u P_r \rho P_r P_u
= \rho\,,
$$
where the last step follows from the fact that $u = r$ for valid ciphertexts. It remains to show that this scheme is secure against chosen-plaintext attacks. We begin by proving indistinguishability of ciphertexts for the quantum one-time pad which uses randomness supplied by a qPRG. We first set the following notation. Recall from \expref{Section}{sec:quantum-prelims} that a string $r$ of $2n$ bits determines a Pauli group element $P_r \in U(2^n)$. Given an $n$-qubit register $A$, an arbitrary register $B$, and $\rho \in \states(\H_A \otimes H_B)$, define $\mathbb P_{r;A} (\rho) := (P_r \otimes \one_B) \rho (P_r \otimes \one_B)$.

\begin{lemma}\label{lemma:indistinguish}
Suppose $G:\bit{n}\rightarrow\bit{m}$ is a qPRG.  Then for any efficiently preparable states $\rho_{AB} \in \states(\H_A \otimes \H_B)$ and $\sigma_A \in \states({\H_A})$, and any QPT~$\mathcal D$,
\begin{equation} \label{eq:prgindisti}
\Bigg|\underset{s \inrand \bit{n}}{\pr}\bigg[\mathcal{D} (\mathbb P_{G(s); A}(\rho_{AB})) = 1\bigg]
-\underset{s \inrand \bit{n}}{\pr}\bigg[\mathcal D(\mathbb P_{G(s); A}(\sigma_A\otimes \rho_B)) = 1\bigg]\Bigg| \leq \negl(n)\,.
\end{equation}
\end{lemma}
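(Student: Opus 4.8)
The plan is to run a three-step hybrid argument that swaps the pseudorandom one-time-pad key $G(s)$ for a truly uniform key $y \inrand \bit{m}$, for which the two scenarios become \emph{identical} by the perfect security of the QOTP. Concretely, I would introduce two intermediate quantities obtained by replacing $G(s)$ with a uniform $y$ in each of the two probabilities on the left-hand side of \eqref{eq:prgindisti}: call them $p_1$ (pseudorandom key on $\rho_{AB}$), $p_2$ (uniform key on $\rho_{AB}$), $p_3$ (uniform key on $\sigma_A \otimes \rho_B$), and $p_4$ (pseudorandom key on $\sigma_A \otimes \rho_B$). The claim then follows from the triangle inequality $|p_1 - p_4| \le |p_1 - p_2| + |p_2 - p_3| + |p_3 - p_4|$ once I bound each of the three terms.

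The central observation is that the two uniform-key hybrids coincide exactly, so $|p_2 - p_3| = 0$. Recall from \expref{Section}{sec:quantum-prelims} that averaging the QOTP over a uniform key completely randomizes the padded register: for an $(m/2)$-qubit register $A$, the map $X \mapsto 2^{-m}\sum_{r \in \bit{m}} P_r X P_r$ is the completely depolarizing channel. Applied to register $A$ of an arbitrary bipartite state, this gives
\begin{equation*}
\frac{1}{2^m}\sum_{r \in \bit{m}} \mathbb P_{r;A}(\rho_{AB}) = \frac{\one_A}{2^{m/2}} \otimes \rho_B = \frac{1}{2^m}\sum_{r \in \bit{m}} \mathbb P_{r;A}(\sigma_A \otimes \rho_B)\,,
\end{equation*}
where the last equality uses $\tr_A(\sigma_A \otimes \rho_B) = \rho_B = \tr_A(\rho_{AB})$. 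Thus $\mathcal D$ is fed exactly the same averaged state in the two uniform-key hybrids, so the corresponding acceptance probabilities are equal.

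It remains to bound $|p_1 - p_2|$ and $|p_3 - p_4|$, each of which reduces directly to qPRG security. For the first I would build a qPRG distinguisher $\mathcal D'$ which, on classical input $z \in \bit{m}$, prepares $\rho_{AB}$ (possible since it is efficiently preparable), applies $\mathbb P_{z;A}$, runs $\mathcal D$ on the result, and outputs $\mathcal D$'s bit. Since applying a classically-specified Pauli is efficient, $\mathcal D'$ is a QPT taking a single $m$-bit string and returning one bit, i.e.\ exactly a qPRG adversary: when $z = G(s)$ it reproduces $p_1$, and when $z \inrand \bit{m}$ it reproduces $p_2$, so \expref{Definition}{def:qPRG} bounds $|p_1 - p_2| \le \negl(n)$. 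An identical distinguisher $\mathcal D''$, preparing $\sigma_A \otimes \rho_B$ instead, bounds $|p_3 - p_4|$; here I would note that $\rho_B = \tr_A(\rho_{AB})$ is efficiently preparable (prepare $\rho_{AB}$ and discard $A$), so $\sigma_A \otimes \rho_B$ is too. Summing the three bounds yields the lemma.

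The argument is essentially routine, and the only points requiring care are structural rather than computational: ensuring that each reduction genuinely fits the qPRG adversary model of \expref{Definition}{def:qPRG} — a QPT that consumes a \emph{classical} string and emits a single bit — and checking that every state the reductions must build, in particular the reduced state $\rho_B$, is efficiently preparable. Both are guaranteed precisely by the efficient-preparability hypothesis on $\rho_{AB}$ and $\sigma_A$, which is exactly why that hypothesis appears in the statement.
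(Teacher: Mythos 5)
Your proof is correct and takes essentially the same approach as the paper: both arguments rest on the two observations that the uniform-key quantum one-time pad depolarizes register $A$ (making the two hybrids identical, equal to $\one_A/2^{m/2} \otimes \rho_B$) and that classical access to the string $G(s)$ suffices to simulate $\mathcal D(\mathbb P_{G(s);A}(\cdot))$, reducing everything to qPRG security. The only difference is packaging — you split the reduction into two qPRG distinguishers combined by the triangle inequality, whereas the paper builds a single distinguisher that flips a coin to choose which of the two states to prepare — and this is a cosmetic, not substantive, distinction.
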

\begin{proof}
The two key observations are (i.) distinguishability as in Equation~\eqref{eq:prgindisti} is impossible if we replace $G(s)$ with uniform randomness, and (ii.) with only classical input/output access to $G$, we can simulate $\mathcal D(\mathbb P_{G(s); A} (\cdot))$ 
Putting these two facts together, it follows that achieving \eqref{eq:prgindisti} implies that outputs of $G$ can be distinguished from uniformly random.

Formally, let us assume that there is an adversary $\mathcal D$ that violates our hypothesis, i.e., that distinguishes some pair of inputs $(\mathbb P_{G(s); A}(\rho_{AB}),\, \mathbb P_{G(s); A}(\sigma_A\otimes \rho_B))$ with probability at least $1/p(n)$ for some polynomial $p$.  Then we'll show an algorithm $\mathcal D'$, that breaks the pseudorandom generator~$G$. On input $y\in\{0,1\}^m$, algorithm $\mathcal{D}'$ does the following:
\begin{itemize}
\item with probability $1/2$, run $\mathcal D$ on input $\mathbb P_{y;A}(\rho_{AB})$;
\item with probability $1/2$, run $\mathcal D$ on input $\mathbb P_{y;A}(\sigma_A\otimes \rho_B)$.
\end{itemize}
Now if $\mathcal{D}$ is able to correctly determine which of the cases we gave it, $\mathcal{D}'$ decides that $y$ must have been distributed pseudorandomly and outputs $1$, else it decides that $y$ is uniformly distributed and outputs $0$.

Notice that if $y=G(s)$, by definition $\mathcal D'$ outputs $1$ when $\mathcal D$ correctly distinguishes the two inputs, which occurs with probability at least $1/2+1/p(n)$ by the assumption on $\mathcal D$. On the other hand, suppose $y\inrand\bit{m}$; then the register $A$ is mapped to the maximally mixed state, and hence $\mathbb P_{y;A}(\rho_{AB}) = \mathbb P_{y;A}(\sigma_A\otimes \rho_B) = \one_A \otimes \rho_B$. In that case, $\mathcal D$ is correct with probability at most $1/2+\negl(n)$ (indeed, this is true for any QPT.) We conclude that $\mathcal D'$ distinguishes the case $y = G(s)$ from the case $y \inrand \bit{m}$ with non-negligible probability; this contradicts the assumption that $G$ is a qPRG.
\qed\end{proof}

Finally, to prove that the construction in \expref{Scheme}{def:PKE-from-qTOWF} is IND-CPA-secure, and thus establish \expref{Theorem}{thm:intro:PKE}, it remains to extend the above proof to a slightly more general scenario. Recall that $\Enc_i(\rho)= \outerprod{f_i^{2n}(d)}{f_i^{2n}(d)} \otimes P_r \rho P_r$ where $r = G(d)$. \expref{Lemma}{lemma:indistinguish} already shows that essentially no QPT adversary can distinguish $(P_{r}\otimes \one_{E}) \rho_{ME} (P_{r}\otimes \one_{E})$ from $(P_{r}\otimes \one_{E})(\ket{0}\bra{0}\otimes \rho_E)(P_{r}\otimes \one_{E})$, for any efficiently preparable bipartite state $\rho_{ME}$ over the message space and the environment. It remains to show that this indistinguishability still holds if the adversary is also provided the classical advice $f_i^{2n}(d)$.  We can prove this extended indistinguishability by extending the hybrid argument in the proof of \expref{Lemma}{lemma:prg-construction} in a standard way. To sketch the argument, first recall that the ``predictor'' algorithm succeeds at predicting the $i+1$\textsuperscript{st} bit of $G(U_n)$ given as input the first $i$ bits of the output of $G$. Now we also allow the predictor to read the bits of $f_i^{2n}(d)$. Success implies breaking the hard-core of $f$ (which is used to define and ensure the security of the qPRG $G$). We conclude that the states
$$
\outerprod{f_i^{2n}(d)}{f_i^{2n}(d)} \otimes \mathbb P_{G(s);M} (\rho_{ME})
\quad \text{and} \quad
\outerprod{f_i^{2n}(d)}{f_i^{2n}(d)} \otimes \mathbb P_{r';M} (\rho_{ME})
$$
are computationally indistinguishable for uniformly random $s, r'$. The right-hand side encryption above obviously satisfies IND-CPA, so we also have  computational indistinguishability of
$$
\outerprod{f_i^{2n}(d)}{f_i^{2n}(d)} \otimes \mathbb P_{r';M} (\rho_{ME})
\quad \text{and} \quad
\outerprod{f_i^{2n}(d)}{f_i^{2n}(d)} \otimes \mathbb P_{r';M} (\egoketbra{0}_M \otimes \rho_E)\,.
$$
By transitivity of computational indistinguishability, we conclude that
$$
\outerprod{f_i^{2n}(d)}{f_i^{2n}(d)} \otimes \mathbb P_{G(s);M} (\egoketbra{0}_M \otimes \rho_E)
\quad \text{and} \quad
\outerprod{f_i^{2n}(d)}{f_i^{2n}(d)} \otimes \mathbb P_{G(s);M} (\rho_{ME})\,,
$$
which completes the proof of \expref{Theorem}{thm:intro:PKE}.



\section{Conclusion}\label{sec:conclusions}

We have defined semantic security for the encryption of quantum data and shown its equivalence with indistinguishability; these results are given in the uniform model for quantum computations, but as is standard classically (see Chapter~5 of Goldreich's text~\cite{Goldreich2004}), these definitions can be adjusted to the case of ``non-uniform'' (but still polynomial-time) adversaries, whose messages need not be generated efficiently. While the proof is  somewhat different, the equivalence of IND and SEM still hold in this case. The constructions of encryption schemes (IND-CCA1 symmetric-key and IND-CPA public-key) presented above carry over as well, except that we now require primitives (qPRFs and qTOWPs, respectively) which are secure against non-uniform adversaries.

\subsection{Extensions and Future Work}
\label{sec:Future-Work}
We now briefly discuss some possible extensions of the above results. In most cases, these extensions are a matter of modifying our definitions and proofs in a fairly straightforward way. We leave the other cases as interesting open problems.

\begin{itemize}

\item Our definitions of IND-CPA, IND-CCA1 and SEM assume that all of the relevant messages are generated in polynomial time. In other words, our results assume ``uniform'' adversaries. As is standard classically (see Chapter 5 of Goldreich's text~\cite{Goldreich2004}), these definitions can be adjusted to the case of ``non-uniform'' (but still polynomial-time) adversaries, whose messages need not be generated efficiently. While the proof is of course somewhat different, the equivalence of IND and SEM still hold in this case. The encryption schemes (IND-CCA1 symmetric-key and IND-CPA public-key) presented above carry over as well, except that we now require primitives (qPRFs and qTOWPs, respectively) which are secure against non-uniform adversaries.

\item Our symmetric-key encryption scheme assumes that the decryption algorithm measures a portion of the input in order to recover a classical randomness string, prior to decrypting. One might find this requirement suspicious, e.g., if a perfect measurement device is too much to assume. This requirement can be removed, but we then need to assume that the relevant primitives (OWFs and qPRFs) are secure against superposition queries. This can also be achieved (see~\cite{Zhandry2012}).

\item One outstanding open problem is to define and construct schemes for CCA2 (adaptive chosen ciphertext attack) security in the case of the encryption of quantum states. Classically, CCA2 security is defined as CCA1, with the further property that the adversary is allowed to query the decryption oracle even {\em after} the challenge query, {\em provided} he does not query about the challenge ciphertext itself (otherwise the challenger aborts the game.) The obvious way to define this in the quantum world is to require that every decryption query performed by the adversary after the challenge query is `very different' from the challenge query itself (e.g., it is orthogonal to the challenge ciphertext.) But the problem here is that this condition might be impossible for the challenger to check: for example, the adversary might embed in a decryption query a component non-orthogonal to the challenge query, but with such a small amplitude that the challenger cannot detect it with high probability. Even if it is unclear whether this issue could raise problems in any actual reduction, it would be anyway a striking asimmetry to the classical case, because there would be no way for the challenger to check that the adversary actually fulfilled the required condition. Hence, giving a satisfactory definition for CCA2 security in the quantum world remains an interesting open problem.

\end{itemize}

\ifthenelse{\boolean{SUBMISSION}}
{
}
{
\subsection{Acknowledgements}

G.\,A. was supported by a Sapere Aude grant of the Danish Council for Independent Research, the ERC Starting Grant ``QMULT'' and the CHIST-ERA project ``CQC''. A.\,B. was supported by Canada's NSERC. B.\,F. was supported by the Department of Defense. T.\,G. was supported by the German Federal Ministry of Education and Research (BMBF) within EC-SPRIDE and CROSSING. C.\,S. was supported by a 7th framework EU SIQS and a NWO VIDI grant.  M.\,S. was supported by the Ontario Ontario Graduate Scholarship Program.
 T.\,G. and C.\,S. would like to thank COST Action IC1306 for networking support. A.\,B., G.\,A., T.\,G., and C.\,S. would like to thank the organizers of the Dagstuhl Seminar 15371 ``Quantum Cryptanalysis'' for providing networking and useful interactions and support during the writing of this paper.
}


\bibliographystyle{alpha}
\newcommand{\etalchar}[1]{$^{#1}$}


\appendix

\section{Alternative Definitions of Quantum Security}
\label{app:alternate-Sem-Sec}
Here, we present further definitions of quantum semantic security  and indistinguishability, and prove their equivalence. The full chain of equivalencies is given in Figure~\ref{fig:equiv-defns}.

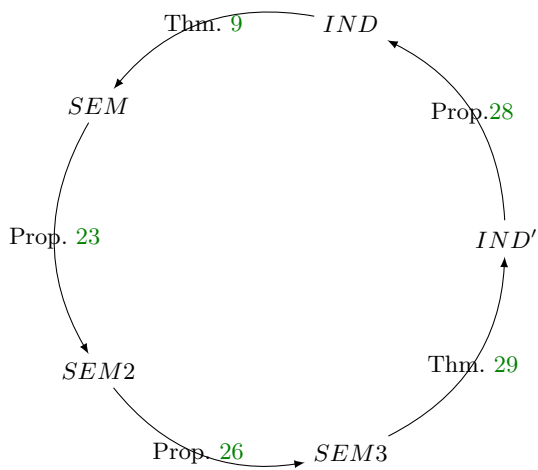
\begin{figure}
\begin{center}
\begin{tikzpicture}
\def \n {5}  
\def \radius {3cm}
\def \margin {15} 

\foreach \s [count=\i] in {IND',IND, SEM, SEM2, SEM3}
{
  \node (\i) at ({360/\n * (\i - 1)}:\radius) {$\s$};
}
 \draw[->, >=latex, bend right] (1) to node {Prop.\ref{prop:IND'-iff-IND}} (2);
  \draw[->, >=latex, bend right] (2) to node {Thm.~\ref{thm:ind''-implies-sem}} (3);
   \draw[->, >=latex, bend right] (3) to node {Prop.~\ref{prop:SEM-implies-SEM2}} (4);
    \draw[->, >=latex, bend right] (4) to node {Prop.~\ref{prop:SEM2-implies-SEM3}} (5);
     \draw[->, >=latex, bend right] (5) to node {Thm.~\ref{thm:SEM3-implies-IND'}} (1);
\end{tikzpicture}
\caption{Relationship between security definitions. }
\label{fig:equiv-defns}
\end{center}
\end{figure}

\subsection{SEM2}

\begin{definition}[Message-classical function generator] A  \textit{message-classical function generator} $\M$ is a QPT message generator (as in~SEM (\expref{Definition}{def:SEM})) such that for each $pk \in \K_{pub}$ and $\rho$ in the range of $\M(pk)$, there is some binary string $y$ such that $\ket{y} \in \H_F$ and $\rho_{MEF} = \rho_{ME} \otimes \egoketbra{y}$.
\end{definition}

That is, the $F$ system is classical, unentangled from and uncorrelated with the rest of $\rho$.

In particular, $\rho_F = \egoketbra{y}$.

\begin{definition}[SEM2]\label{def:SEMtwo}
A qPKE scheme $(\KeyGen, \Enc, \Dec)$ is \emph{SEM2-secure} if for any QPT adversary $\A$, there exists a QPT
simulator $\S$ such that for all message-classical function generators $\M$,
\begin{equation*}
\left|\Pr \big[ \; \A\big\{ (\Enc_{pk} \otimes \one_E)\rho_{ME} \big\} = \rho_{F} \; \big] -
\Pr \big[ \; \S (\rho_{E}) = \rho_{F} \; \big] \right| \leq \negl(n)
\end{equation*}
where the outputs of $\A$ and $\S$ are measured in the computational basis before equality is checked, $\rho_{MEF} \from \M(pk)$, and the probabilities are taken over $(pk, sk) \leftarrow \KeyGen(1^n)$ and the internal randomness of \Enc, $\A$, $\S$ and $\D$.
\begin{itemize}
\item \textbf{\emph{SEM2-CPA:}} In addition to the above, all QPTs are given oracle access to $\Enc_{pk}$.
\item \textbf{\emph{SEM2-CCA1:}} In addition to SEM2-CPA, $\M$ is given oracle access to $\Dec_{sk}$.
\end{itemize}
\end{definition}

\begin{proposition}
\label{prop:SEM-implies-SEM2}
If a quantum encryption scheme $(\KeyGen, \Enc, \Dec)$ is semantically secure, then it is SEM2 secure.
\end{proposition}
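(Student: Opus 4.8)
The plan is to realize SEM2 as the special case of SEM in which the message generator is restricted to be message-classical and the SEM distinguisher is pinned down to a single canonical ``compare-and-output'' procedure. Because \expref{Definition}{def:SEM} quantifies the simulator \emph{before} the message generator $\M$ and distinguisher $\D$, the simulator handed to us by semantic security will serve verbatim as the SEM2 simulator; the whole argument is then a translation of probabilities rather than a new construction.

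First I would fix an arbitrary SEM2 adversary $\A$ and observe that it is already a legitimate SEM adversary: in both \expref{Definition}{def:SEM} and \expref{Definition}{def:SEMtwo} the adversary acts on the encrypted message register together with the side-information register $E$, leaving $F$ untouched, so no adaptation is required. Invoking semantic security for this $\A$ yields a QPT simulator $\S$, acting on $E$ only, for which the SEM indistinguishability bound holds for \emph{all} QPT message generators $\M$ and distinguishers $\D$. The claim is that this same $\S$ witnesses SEM2-security against $\A$.

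To verify the claim, let $\M$ be any message-classical function generator, so that $\rho_{MEF} = \rho_{ME} \otimes \egoketbra{y}$ with $\ket{y} \in \H_F$; consequently $\rho_{EF} = \rho_E \otimes \egoketbra{y}$ and $\rho_F = \egoketbra{y}$, and the $E$-part seen by $\S$ in $(\S \otimes \one_F)\rho_{EF}$ is exactly $\rho_E$. Now define a fixed SEM distinguisher $\D$ that, given the output register of $\A$ (or of $\S$) together with the $F$ register, measures both in the computational basis and outputs $1$ precisely when the two outcomes agree; since $F$ is classical this is a valid QPT, and it does not depend on the particular $\M$. With this choice, $\Pr[\D\{(\A \otimes \one_F)(\Enc_{pk} \otimes \one_{EF})\rho_{MEF}\} = 1]$ is exactly $\Pr[\A\{(\Enc_{pk} \otimes \one_E)\rho_{ME}\} = \rho_F]$ with $\A$'s output measured and compared to $y$, while $\Pr[\D\{(\S \otimes \one_F)\rho_{EF}\} = 1]$ is exactly $\Pr[\S(\rho_E) = \rho_F]$. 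Hence the SEM2 advantage of $\A$ coincides with the SEM advantage of $\A$ against this single pair $(\M, \D)$, which is negligible by semantic security. As $\M$ ranged over all message-classical generators and $\S$ never depended on $\M$, this establishes SEM2.

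Finally I would note that the CPA and CCA1 variants need no new ideas: one carries the appropriate oracle accesses ($\Enc_{pk}$ for all parties, and additionally $\Dec_{sk}$ for $\M$) unchanged through the argument; a message-classical generator equipped with these oracles is still a legitimate SEM-CPA (resp.\ SEM-CCA1) generator, and the comparison distinguisher requires no oracle beyond what SEM already grants it. I expect the only genuine obstacle to be the bookkeeping of quantifier order and registers---namely confirming that the \emph{single} simulator produced by SEM really works uniformly over all message-classical generators against the fixed comparison distinguisher---but this follows immediately from the ``for all $\M$ and $\D$'' clause of \expref{Definition}{def:SEM}. The probabilistic content beyond that is a direct identification of the two success probabilities.
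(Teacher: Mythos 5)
Your proof is correct and takes essentially the same approach as the paper: view the message-classical function generator as an ordinary SEM message generator and instantiate the SEM distinguisher as the computational-basis equality test between the adversary's (or simulator's) output and the $F$ register, so that the SEM advantage against this pair $(\M,\D)$ equals the SEM2 advantage. The paper's proof is a two-line sketch of exactly this argument; your write-up merely makes explicit the quantifier-order point (the simulator from SEM does not depend on $\M$ or $\D$) and the CPA/CCA1 bookkeeping.
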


\begin{proof}
A message-classical function generator is also a message. In SEM, have the distinguisher $\D$ implement an equality test (simulate any efficient classical circuit implementing it; if the input lengths aren't the same, output 0 immediately). \qed
\end{proof}

\subsection{SEM3}

\begin{definition}[Message Generator-Function Pair] A \textit{message generator-function pair} is a tuple $(\M, f)$, such that $\M$ is a QPT message generator (as in~IND (\expref{Definition}{def:IND})) and $f = (f_n)_n$ is a QPT algorithm, such that $f_{pk} := f_n(pk)$ is the description of a boolean circuit, for $pk \in \K_{pub}$, with the number of input bits to $f_{pk}$ equal to the number of measurement gates in the quantum circuit $\M_n$. In the symmetric-key scenario, $f_n$ has no input.
\end{definition}

\begin{definition}[SEM3]
 \label{def:SEMthree}
A qPKE scheme $(\KeyGen, \Enc, \Dec)$ is \emph{SEM3-secure} if for any QPT adversary $\A$, there exists a QPT simulator $\S$ such that for all message generator-function pairs $(\M, f)$,
\begin{equation*}
\left|\Pr \big[ \; \A\big\{ (\Enc_{pk} \otimes \one_E)\rho_{ME} \big\} = f_{pk}(x) \; \big] -
\Pr \big[ \; \S (\rho_{E}) = f_{pk}(x) \; \big] \right| \leq \negl(n)
\end{equation*}
where the outputs of $\A$ and $\S$ are measured in the computational basis before equality is checked, $\rho_{ME} \from \M(pk)$, $x$ is the string of measurement results generating $\rho_{ME}$, and the probabilities are taken over $(pk, sk) \leftarrow \KeyGen(1^n)$ and the internal randomness of \Enc, $\A$, $\S$ and $\D$.
\begin{itemize}
\item \textbf{\emph{SEM3-CPA:}} In addition to the above, all QPTs are given oracle access to $\Enc_{pk}$.
\item \textbf{\emph{SEM3-CCA1:}} In addition to SEM3-CPA, $\M$ is given oracle access to $\Dec_{sk}$.
\end{itemize}
\end{definition}

We note that $f_{pk}$ is a function of the random input and measurement results, which completely determine the state. Hence, if $f_{pk}$ is the identity for all $pk$, and it can be computed given a ciphertext, this means we can compute measurement results necessary to prepare the state. Simulating the message generator but selecting for the correct measurement results would allow the preparation of the same state again, although this is not in general efficient.

\begin{proposition}[SEM2 $implies$ SEM3]
\label{prop:SEM2-implies-SEM3}
If a quantum encryption scheme $(\KeyGen, \Enc, \Dec)$ is SEM2 secure, then it is SEM3 secure.
\end{proposition}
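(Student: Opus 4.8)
The plan is to reduce every SEM3 instance to a SEM2 instance by absorbing the target function $f$ into the message generator, so that the very simulator guaranteed by SEM2 security also works for SEM3. Concretely, suppose the scheme is SEM2 secure and let $\A$ be a SEM3 adversary. The adversary's interface is identical in both games — it receives $(\Enc_{pk} \otimes \one_E)\rho_{ME}$ and outputs a classical string — so I would feed this same $\A$ into the SEM2 guarantee (Definition~\ref{def:SEMtwo}) to obtain a simulator $\S$ that depends only on $\A$. The claim is that this $\S$ witnesses SEM3 security for $\A$ as well. The feature being exploited is the order of quantifiers in SEM2: the simulator is fixed before the generator is chosen, so it is legitimate to reuse the same $\S$ against every SEM3 pair $(\M, f)$.

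Given any message generator-function pair $(\M, f)$, I would build a message-classical function generator $\M'$ as follows. On input $pk$ (and with the same oracle access as in the relevant CPA/CCA1 variant), $\M'$ runs $\M(pk)$, but after each measurement gate it copies the classical outcome into a fresh ancilla, thereby recording the full string of measurement results $x$; because these outcomes are already classical, this copy-out is non-disturbing and leaves the $(M,E)$-marginal exactly equal to the state $\rho_{ME}$ produced by $\M$. It then computes the circuit description $f_{pk} = f_n(pk)$, evaluates $y = f_{pk}(x)$, and writes $\ket{y}$ into register $F$, outputting $M$, $E$, and $F$ and discarding the ancilla holding $x$. In each run the $F$ register holds a computational-basis state, so $\M'$ is a valid message-classical function generator with $\rho_F = \egoketbra{f_{pk}(x)}$, and its $(M,E)$-marginal — hence the simulator's input $\rho_E = \tr_M(\rho_{ME})$ — coincides with that of the SEM3 experiment.

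With this dictionary in place, the two probabilities in the SEM2 inequality for $\M'$ are literally those in the SEM3 inequality for $(\M,f)$: running $\A$ on the ciphertext and checking equality with $\rho_F$ is exactly the event $\A\{(\Enc_{pk}\otimes\one_E)\rho_{ME}\} = f_{pk}(x)$, and running $\S$ on $\rho_E$ and checking equality with $\rho_F$ is the event $\S(\rho_E) = f_{pk}(x)$. Applying SEM2 security to $\M'$ then bounds their difference by $\negl(n)$, which is precisely the SEM3 condition. The oracle-enabled variants go through verbatim: $\M'$ simply forwards $\A$'s and $\M$'s queries to its own $\Enc_{pk}$ (and, under CCA1, $\Dec_{sk}$) oracles.

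I expect the only delicate point to be checking that $\M'$ genuinely qualifies as a message-classical function generator — in particular, that recording the measurement string $x$ neither disturbs the $(M,E)$-marginal nor spoils the classicality of $F$, and that the bit-length of $x$ matches the input arity demanded of $f_{pk}$. Both hold by construction: measurement outcomes are classical and can be coherently copied without back-action, and the arity condition is exactly the one built into the definition of a message generator-function pair.
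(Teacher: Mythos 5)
Your proposal is correct and follows essentially the same route as the paper: the paper's (much terser) proof likewise converts the pair $(\M, f)$ into a message-classical function generator by copying the measurement outcomes after each measurement gate, applying $f_{pk}$ to the resulting string, and placing the result in the $F$ register, whereupon the SEM2 guarantee becomes the SEM3 guarantee verbatim. Your additional checks — the quantifier order allowing reuse of the same simulator, preservation of the $(M,E)$-marginal, and the oracle-forwarding for the CPA/CCA1 variants — are details the paper leaves implicit.
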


\begin{proof}
A message generator-function pair $(\M,f)$ can be turned into a message-classical function generator $\M$ by copying the measurement results $x_1, x_2, \dots, x_m$ after each measurement gate, and applying $f_{pk}$ to $x_1 x_2 \dots x_m$ and letting the result be the $F$ system. \qed
\end{proof}

\subsection{IND'}
\begin{definition}[IND']
\label{def:IND'}
A qPKE scheme $(\KeyGen, \Enc, \Dec)$ is {\em IND' secure} if for every QPT adversary $\A=(\M,\D)$ we have:
\begin{equation*}
\Pr \big[ \; \D\big\{ (\Enc_{pk} \otimes \one_E)\rho^{(b)}_{ME} \big\} = b \big] \leq \tfrac{1}{2} + \negl(n)
\end{equation*}
where $\rho_{ME} \from \M(pk)$, for $b$ a uniformly random bit, $\rho^{(1)}_{ME} = \rho_{ME}$ and $\rho^{(0)}_{ME} = \egoketbra{0}_M \otimes \rho_E$, and the probabilities are taken over $(pk, sk) \leftarrow \KeyGen(1^n)$, $b$ and the internal randomness of \Enc, $\M$, and $\D$.
\begin{itemize}
\item \textbf{\emph{IND'-CPA:}} In addition to the above, $\M$ and $\D$ are given oracle access to $\Enc_{pk}$.
\item \textbf{\emph{IND'-CCA1:}} In addition to IND'-CPA, $\M$ is given oracle access to $\Dec_{sk}$.
\end{itemize}
\end{definition}

\begin{proposition}[IND' $\iff$ IND]
\label{prop:IND'-iff-IND}
A quantum encryption scheme \\* $(\KeyGen, \Enc, \Dec)$ is IND' secure if and only if it is IND secure.
\end{proposition}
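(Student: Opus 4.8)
The plan is to observe that IND and IND' quantify the very same object---the distinguishing bias between the real-encryption scenario and the zero-encryption scenario---but package it differently: IND records the two-sided probability difference under an absolute value, whereas IND' records the one-sided probability of correctly guessing the challenge bit $b$. The bridge between the two is an elementary conditioning identity. Recalling that $\rho^{(1)}_{ME}=\rho_{ME}$ and $\rho^{(0)}_{ME}=\egoketbra{0}_M\otimes\rho_E$, and that $b$ is a uniformly random bit, for any distinguisher $\D$ we have
$$\Pr[\D = b] = \tfrac{1}{2}\Pr\big[\D\{(\Enc_{pk}\otimes\one_E)\rho_{ME}\} = 1\big] + \tfrac{1}{2}\Pr\big[\D\{(\Enc_{pk}\otimes\one_E)(\egoketbra{0}_M\otimes\rho_E)\} = 0\big].$$
Rewriting the second term as $1-\Pr[\D\{\cdots\}=1]$ gives
$$\Pr[\D = b] - \tfrac{1}{2} = \tfrac{1}{2}\Big(\Pr\big[\D\{(\Enc_{pk}\otimes\one_E)\rho_{ME}\} = 1\big] - \Pr\big[\D\{(\Enc_{pk}\otimes\one_E)(\egoketbra{0}_M\otimes\rho_E)\} = 1\big]\Big),$$
so the IND' advantage equals exactly half the \emph{signed} IND difference for the same adversary pair $(\M,\D)$.

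For the direction IND $\implies$ IND', I would take an arbitrary IND' adversary $(\M,\D)$ and apply the identity: the signed difference is bounded in absolute value by the IND advantage of the identical pair $(\M,\D)$, which is negligible by hypothesis; hence $\Pr[\D=b]-\tfrac12$ is negligible, establishing IND' security.

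For the converse IND' $\implies$ IND, the only subtlety is the absolute value in the IND definition, since IND' only controls the signed quantity. Given an IND adversary $(\M,\D)$, the identity shows its signed difference equals $2(\Pr[\D=b]-\tfrac12)$, which is negligible by IND'. If instead the signed difference is negative, I would run the IND' adversary $(\M,\D\oplus 1)$ that flips the output bit: flipping exchanges the two output probabilities and negates the signed difference, so one of $(\M,\D)$ or $(\M,\D\oplus 1)$ certifies that the magnitude of the difference---and hence the IND advantage---is negligible. This is the same output-flipping device already used in the proof of Theorem~\ref{thm:sem-implies-ind''}.

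Finally, since the CPA and CCA1 variants of both definitions grant exactly the same oracle access ($\Enc_{pk}$ to $\M$ and $\D$, plus $\Dec_{sk}$ to $\M$ under CCA1), the reductions above pass every oracle query through verbatim, so the equivalence holds uniformly for the plain, CPA, and CCA1 versions. I expect no genuine obstacle in this proposition; the single point requiring care is handling the absolute value via the output-flip, which is routine.
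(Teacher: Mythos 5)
Your route is essentially the paper's own---condition on the challenge bit $b$ to relate the IND$'$ guessing probability to the signed IND difference, then use the flipped distinguisher $\D\oplus 1$ to remove the absolute value---but your ``elementary conditioning identity'' is not an identity under this paper's definitions, and that breaks your IND$'\implies$IND direction. Here a QPT distinguisher may output an arbitrary classical string: the test $\D\{\cdot\}=v$ only checks whether the output register equals $\outerprod{v}{v}$, so all one can say is $\Pr[\D\{\cdot\}=0]\le 1-\Pr[\D\{\cdot\}=1]$, possibly strictly. The conditioning computation therefore yields only
\[
\Pr[\D=b]-\tfrac12 \;\le\; \tfrac12\Bigl(\Pr\bigl[\D\{(\Enc_{pk}\otimes\one_E)\rho_{ME}\}=1\bigr]-\Pr\bigl[\D\{(\Enc_{pk}\otimes\one_E)(\egoketbra{0}_M\otimes\rho_E)\}=1\bigr]\Bigr)\,.
\]
This one-sided bound is all you need for IND$\implies$IND$'$, so that direction stands as you wrote it. But for the converse it is useless: a lower bound on the signed difference cannot upper-bound it. Concretely, take $\D$ that outputs $1$ on encryptions of the real plaintext and some string outside $\{0,1\}$ on encryptions of $\egoketbra{0}$. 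Its IND advantage is $1$, yet $\Pr[\D=b]=\tfrac12$ and $\Pr[\D\oplus 1=b]=\tfrac12\Pr[\D(\mathrm{real})=0]+\tfrac12\Pr[\D(\mathrm{zero})=1]=0$, so both $(\M,\D)$ and $(\M,\D\oplus1)$ have no IND$'$ advantage at all, and your flip trick certifies nothing.

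The missing ingredient---exactly the step the paper inserts before flipping---is a normalization of the IND adversary: replace $\D$ by the distinguisher that runs $\D$ and outputs $0$ whenever $\D$ would output anything other than $0$ or $1$. This leaves $\Pr[\D\{\cdot\}=1]$ unchanged on both input distributions, hence preserves the IND advantage, while forcing $\Pr[\D\{\cdot\}=0]=1-\Pr[\D\{\cdot\}=1]$; for the normalized distinguisher your identity does hold with equality, and then the case analysis on $\D$ versus $\D\oplus1$ (the maximum of two negligible functions being negligible) closes the argument exactly as you intended. With that one added reduction, your proof coincides with the paper's proof of Proposition~\ref{prop:IND'-iff-IND}.
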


\begin{proof}

We drop brackets and the register subcripts where possible. 

\begin{eqnarray*}
&& \Pr[\D(\Enc_{pk} \otimes \one_E)\rho^{(b)} = b] \\
&& = \Pr[\D(\Enc_{pk} \otimes \one_E)\rho^{(b)} = b \ | \ b = 1]\Pr[b = 1] \\
&& \ \ \ + \Pr[\D(\Enc_{pk} \otimes \one_E)\rho^{(b)} = b \ | \ b = 0]\Pr[b = 0] \\
&& =  \tfrac{1}{2} (\Pr[\D(\Enc_{pk} \otimes \one_E)\rho = 1] + \Pr[\D(\Enc_{pk}\ket{0}\bra{0} \otimes \rho_E) = 0]) \\
&& \leq \tfrac{1}{2} (\Pr[\D(\Enc_{pk} \otimes \one_E)\rho = 1] + 1- \Pr[\D(\Enc_{pk}\ket{0}\bra{0} \otimes \rho_E) = 1]) \\
&& =  \tfrac{1}{2} +  \tfrac{1}{2}(\Pr[\D(\Enc_{pk} \otimes \one_E)\rho = 1] - \Pr[\D(\Enc_{pk}\ket{0}\bra{0} \otimes \rho_E) = 1])
\end{eqnarray*}

Note that we only get $\leq$ since $\D$ may output some binary string other than 0 or 1. So:

\begin{eqnarray*}
&& \Pr[\D(\Enc_{pk} \otimes \one_E)\rho^{(b)} = b] -  \tfrac{1}{2} \\
&& \leq \tfrac{1}{2}|\Pr[\D((\Enc_{pk} \otimes \one_E)\rho = 1] - \Pr[\D(\Enc_{pk}\ket{0}\bra{0} \otimes \rho_E) = 1]|\,.
\end{eqnarray*}
Thus, IND $\implies$ IND'.

Now consider replacing $\D$ with the distinguisher which starts the same as $\D$, but if $\D$ would have output something other than 0 or 1, it simply outputs 0. Then the quantity $|\Pr[\D((\Enc_{pk} \otimes \one_E)(\rho))) = 1] - \Pr[\D(\Enc_{pk}\ket{0}\bra{0} \otimes \rho_E) = 1]|$ is the same for this new distinguisher, so without loss of generality, $\D$ only outputs 0 or 1.

Then the first $\leq$ becomes an = , i.e.

\begin{eqnarray*}
&& \Pr[\D(\Enc_{pk} \otimes \one_E)\rho^{(b)} = b] -  \tfrac{1}{2} \\
&& = \tfrac{1}{2}(\Pr[\D(\Enc_{pk} \otimes \one_E)\rho = 1] - \Pr[\D(\Enc_{pk}\ket{0}\bra{0} \otimes \rho_E) = 1])
\end{eqnarray*}

and, similarly,

\begin{eqnarray*}
&& \Pr[\D(\Enc_{pk} \otimes \one_E)\rho^{(b)} = b \oplus 1] \\
&& = \Pr[\D(\Enc_{pk} \otimes \one_E)\rho^{(b)} = b  \oplus 1 \ | \ b = 1]\Pr[b = 1] \\
&& \ \ \ + \Pr[\D(\Enc_{pk} \otimes \one_E)\rho^{(b)} = b  \oplus 1 \ | \ b = 0]\Pr[b = 0] \\
&& =  \tfrac{1}{2} (\Pr[\D(\Enc_{pk} \otimes \one_E)\rho = 0] + \Pr[\D(\Enc_{pk}\ket{0}\bra{0} \otimes \rho_E) = 1]) \\
&& = \tfrac{1}{2} (1 - \Pr[\D(\Enc_{pk} \otimes \one_E)\rho = 1] + \Pr[\D(\Enc_{pk}\ket{0}\bra{0} \otimes \rho_E) = 1]) \\
&& =  \tfrac{1}{2} +  \tfrac{1}{2}(\Pr[\D(\Enc_{pk}\ket{0}\bra{0} \otimes \rho_E) = 1] - \Pr[\D(\Enc_{pk} \otimes \one_E)\rho = 1])
\end{eqnarray*}

so,

\begin{eqnarray*}
&& \Pr[\D(\Enc_{pk} \otimes \one_E)\rho^{(b)} = b \oplus 1] - \tfrac{1}{2} \\
&& = \tfrac{1}{2}(\Pr[\D(\Enc_{pk}\ket{0}\bra{0} \otimes \rho_E) = 1] - \Pr[\D(\Enc_{pk} \otimes \one_E)\rho = 1])
\end{eqnarray*}

Combining the above,

\begin{eqnarray*}
&& \tfrac{1}{2}|\Pr[\D(\Enc_{pk} \otimes \one_E)\rho = 1] - \Pr[\D(\Enc_{pk}\ket{0}\bra{0} \otimes \rho_E) = 1]|  \\
&& = \max\{\Pr[\D(\Enc_{pk} \otimes \one_E)\rho^{(b)} = b] - \tfrac{1}{2}, \Pr[\D(\Enc_{pk} \otimes \one_E)\rho^{(b)} = b \oplus 1] - \tfrac{1}{2} \}
\end{eqnarray*}

Hence IND' $\implies$ IND by applying IND to both $\D$ and $\D \oplus 1$ (the latter outputs the answer opposite to $\D$), for $\Pr[\D(\Enc_{pk} \otimes \one_E)\rho^{(b)} = b]$ and $\Pr[\D(\Enc_{pk} \otimes \one_E)\rho^{(b)} = b \oplus 1]$, respectively. The maximum of two negligible functions is again negligible. \qed

\end{proof}

\begin{theorem}[SEM3 $\implies$ IND']
\label{thm:SEM3-implies-IND'}
If a quantum encryption scheme \\* $(\KeyGen, \Enc, \Dec)$ is SEM3 secure, then it is IND' secure.
\end{theorem}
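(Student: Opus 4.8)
The plan is to give a direct reduction turning any IND$'$ adversary into a SEM3 message generator--function pair together with a SEM3 adversary, so that the adversary's SEM3 success probability is exactly the IND$'$ guessing probability, while no simulator can do better than $1/2$. Concretely, I start with an arbitrary IND$'$ adversary $(\M, \D)$ as in \expref{Definition}{def:IND'} and build a SEM3 triple against \expref{Definition}{def:SEMthree}.

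The SEM3 message generator $\M'$ first produces a uniformly random bit $b$ by preparing a fresh qubit in the state $\ket{+}$ and measuring it in the computational basis; crucially, $b$ thereby becomes one of the recorded measurement results in the string $x$. Then $\M'$ runs $\M(pk)$ to obtain $\rho_{ME}$. If $b=1$ it keeps $\rho_{ME}$ unchanged, and if $b=0$ it traces out the $M$-register and replaces it with a fresh $\egoketbra{0}_M$, leaving $\rho_E = \tr_M(\rho_{ME})$ in the side-information register. Thus, conditioned on the measured value, the message register holds $\rho^{(b)}_{ME}$ exactly as in IND$'$, while the side-information register equals $\rho_E$ irrespective of $b$. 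The accompanying function is simply $f'_{pk}(x) := b$, which reads off the recorded challenge bit; this is a trivial polynomial-size boolean circuit whose input arity matches the number of measurement gates of $\M'$. Finally, the SEM3 adversary $\A'$ runs the IND$'$ distinguisher $\D$ on its input (the encrypted message register together with the side information) and outputs $\D$'s bit.

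With this setup, the SEM3 success probability $\Pr[\A'\{(\Enc_{pk}\otimes\one_E)\rho'_{ME}\} = f'_{pk}(x)]$ equals, by construction and after averaging over the measured bit $b$, the IND$'$ guessing probability $\Pr[\D\{(\Enc_{pk}\otimes\one_E)\rho^{(b)}_{ME}\} = b]$. SEM3 then supplies a simulator $\S'$ whose success probability differs from that of $\A'$ by at most $\negl(n)$. Since $\S'$ receives only $\rho_E$, which is identically distributed for $b=0$ and $b=1$, its output is statistically independent of the uniform bit $b$, so $\Pr[\S'(\rho_E) = f'_{pk}(x)] = \Pr[\S'(\rho_E) = b] = 1/2$ exactly. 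Combining the two statements gives $\Pr[\D\{(\Enc_{pk}\otimes\one_E)\rho^{(b)}_{ME}\} = b] \le 1/2 + \negl(n)$, which is precisely IND$'$. For the CPA and CCA1 variants I forward the relevant oracles ($\Enc_{pk}$ to all parties, and $\Dec_{sk}$ to $\M$) through the reduction unchanged; the independence argument survives because $\rho_E$, the simulator's only input, remains uncorrelated with the freshly sampled $b$.

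The step I expect to require the most care is the encoding of the challenge bit as a genuine measurement result that the classical function $f'$ is entitled to read: the SEM3 formalism insists that $f_{pk}$ act on the string $x$ of measurement outcomes whose length equals the number of measurement gates of $\M'$, so I must ensure $\M'$ actually generates $b$ via a measurement gate rather than internal coins, and that $f'$ is declared over the correctly-sized domain. The second point demanding attention is the \emph{exact} (not merely approximate) $1/2$ bound on the simulator, which hinges on $b$ being information-theoretically independent of everything $\S'$ sees; this independence must be re-verified in the oracle-assisted settings, where one checks that none of the oracles available to $\S'$ leak any information about $b$.
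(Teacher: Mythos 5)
Your proposal is correct and follows essentially the same route as the paper's proof: the same message generator that flips a Hadamard-measured bit $b$ to output either $\rho_{ME}$ or $\egoketbra{0}_M \otimes \rho_E$, the same target function $f_{pk}(xb)=b$, the adversary $\A := \D$, and the same independence argument bounding the simulator followed by the triangle inequality. The only nitpick is that the simulator's success probability should be stated as \emph{at most} $1/2$ rather than exactly $1/2$ (the simulator could output a string other than $0$ or $1$), but since an upper bound is all that is needed, this does not affect the argument.
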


\begin{proof}
We drop brackets and the register subcripts where possible. 

Let $(\M, \D)$ be an IND' adversary.

Let us consider the SEM3 message generator $\M'$ which runs $\rho_{ME} \from \M$ and outputs (with probability $\frac12$ each) either the state $\rho_{ME}$ or the state $\egoketbra{0}_M \otimes \rho_{E}$, and we denote its output by $\rho'_{ME}$ In particular, it prepares a random bit $b$ to do so by measuring a ancilary qubit to which the Hadamard was applied.

Define $f_{pk}(xb) = b$.

Define the SEM3 adversary $\A := \D$.

In this way, the SEM3 game simulates the indistinguishability game, and
\[ \Pr[\A((\Enc_{pk} \otimes \one_E)(\rho'_{ME}))) = f_{pk}(xb)] = \Pr[\D((\Enc_{pk} \otimes \one_E)(\rho^{(b)}))) = b] \]

Now, by SEM3, there is some simulator $\S$ for $\A$ so that

\[ |\Pr[\A(\Enc_{pk} \otimes \one_E)\rho'_{ME} = f_{pk}(xb)] - \Pr[\S\rho'_E = f_{pk}(xb)]| \leq \negl(n) \]

i.e.

\[ |\Pr[\D((\Enc_{pk} \otimes \one_E)(\rho^{(b)}))) = b] - \Pr[\S\rho'_E = f_{pk}(xb)]| \leq \negl(n) \]

Note that $\S$'s input $\rho'_E$ is independent of $b$. Hence

\[\Pr[\S \rho'_E = f_{pk}(xb)] \leq \tfrac{1}{2} \, .\]

Finally, by the triangle inequality applied to the last two inequalities,

\[ \Pr[\D(\Enc_{pk} \otimes \one_E)\rho^{(b)} = b]  \leq  \tfrac{1}{2} + \negl(n) \] \qed

\end{proof}


\end{document}